\newcommand{\agentIdxA}{i}
\newcommand{\agentIdxB}{j}
\newcommand{\iterIdx}{q}
\newcommand{\iterIdxB}{p}
\newcommand{\numAgent}{M}
\newcommand{\agentSet}{\mathcal{M}}
\newcommand{\col}{\text{col}}
\DeclareMathOperator*{\argmin}{arg\,min}
\newtheorem{theorem}{Theorem}
\newtheorem{proposition}{Proposition}
\newtheorem{corollary}{Corollary}
\theoremstyle{definition}
\newtheorem{assumption}{Assumption}
\theoremstyle{definition}
\theoremstyle{remark}
\newtheorem{remark}{Remark}
\title{\LARGE \bf
Trajectory Optimization for Nonlinear Multi-Agent Systems using Decentralized Learning Model Predictive Control*
}
\author{Edward L.\ Zhu$^{1}$, Yvonne R.\ St\"urz$^{1}$, Ugo Rosolia$^{2}$, and Francesco Borrelli$^{1}$% <-this % stops a space
\thanks{$^{1}$The authors are with the Department of Mechanical Engineering,
        University of California at Berkeley, Berkeley, CA 94701 USA
        {\tt\small edward.zhu@berkeley.edu}}
\thanks{$^{2}$The author is with the Department of Mechanical and Civil Engineering,
California Institute of Technology, Pasadena, CA 91125 USA}
\thanks{*This project has received funding from the European Union’s Horizon 2020 research and innovation programme under the Marie Sklodowska-Curie grant agreement No.\ 846421. This work was partially funded by the grant ONR-N00014-18-1-2833.}% <-this % stops a space
}
\begin{document}

\maketitle
\thispagestyle{empty}
\pagestyle{empty}

%%%%%%%%%%%%%%%%%%%%%%%%%%%%%%%%%%%%%%%%%%%%%%%%%%%%%%%%%%%%%%%%%%%%%%%%%%%%%%%%
\begin{abstract}
We present a decentralized minimum-time trajectory optimization scheme based on learning model predictive control for multi-agent systems with nonlinear decoupled dynamics and coupled state constraints. By performing the same task iteratively, data from previous task executions is used to construct and improve local time-varying safe sets and an approximate value function. These are used in a decoupled MPC problem as terminal sets and terminal cost functions. Our framework results in a decentralized controller, which requires no communication between agents over each iteration of task execution, and guarantees persistent feasibility, finite-time closed-loop convergence, and non-decreasing performance of the global system over task iterations. Numerical experiments of a multi-vehicle collision avoidance scenario demonstrate the effectiveness of the proposed scheme.

\end{abstract}

%%%%%%%%%%%%%%%%%%%%%%%%%%%%%%%%%%%%%%%%%%%%%%%%%%%%%%%%%%%%%%%%%%%%%%%%%%%%%%%%
\section{Introduction} \label{sec:intro}

In this paper, we study the problem of decentralized Model Predictive Control (MPC) for dynamically decoupled multi-agent systems under the minimum-time cost and coupled state constraints. Multi-agent systems typically exhibit inter-agent coupling, which can be expressed as constraints on the global system. MPC is a well-studied approach to the control of such constrained systems and can be applied in a global manner for centralized control of multi-agent systems with a small number of agents. However, as the number of agents increases, centralized approaches typically become intractable in practice due to limitations in computational power and communication capacities~\cite{negenborn2009multi}. This gives rise to decentralized and distributed MPC schemes, which leverage the inherent parallelizable structure of multi-agent systems to reduce the required computational effort. Feasibility and stability of MPC are typically obtained using a terminal cost function and terminal constraints in the MPC design~\cite{mayne2000constrained}, which we refer to as terminal components. However, synthesis of these terminal components for the control of nonlinear multi-agent systems is in general challenging.

The primary advantage of decentralized MPC over its distributed counterpart is the lower communication demand. While only local information is used in the control of each agent in decentralized schemes, additional communication between agents is required to obtain a control action in distributed methods \cite{bemporad2010networked,negenborn2014distributed,dunbar2006distributed,ferranti2018coordination,conte2016distributed,luis2019trajectory,liu2010distributed, Stuerz2020}. A literature review on distributed MPC is extensive and goes beyond the scope of this conference paper. 
%% This is illustrated in \cite{dunbar2006distributed}, where planned control sequences are shared between agents after the local MPC problems are solved at each sampling time. A compatibility constraint, which limits deviation of each local solution from that plan, ensures feasibility of the global system. However, the online broadcasting of each agents' predictions to its neighbors, which is required by this method, may be prohibitive in highly connected systems with many agents.

In decentralized MPC, local controllers are synthesized for each agent, where feasibility and stability properties are attained via robustness of the controller against coupling to other agents. In prior work, conditions for the stability of decentralized MPC for nonlinear systems are investigated. In \cite{alessio2007decentralized,alessio2011decentralized}, this is attained via terminal cost synthesis and an online supervisory scheme for modifying the decoupling structure to meet these conditions without destabilizing the system. The work in~\cite{keviczky2006decentralized} achieves this by bounding the prediction error of neighboring agents' states. However, this method assumes the singleton terminal set of the origin, which limits the controller's domain of attraction. In addtion, neither method can deal with coupling state constraints between the agents. 
 
Iterative approaches to the control of multi-agent systems have also been proposed. Of such methods, Sequential Convex Programming (SCP) \cite{boyd2008sequential} is similar to the approach proposed in this paper. SCP solves a non-convex optimization problem by successively forming convex approximations about previous solutions. In \cite{augugliaro2012generation}, SCP was used to generate collision-free trajectories for multiple quadcopters in a centralized manner. This was extended into a decentralized formulation in \cite{chen2015decoupled}, which showed improvements in computational tractability. However, in these works, SCP is a \textit{heuristic} method and may fail to find a feasible solution. In addition, SCP optimizes over the entire trajectory, which contrasts with the receding horizon approach taken in this work, and may be computationally challenging for long time horizons or fine time discretizations.

In this paper, we propose a decentralized approach to trajectory optimization for nonlinear multi-agent systems. By performing the same task over iterations, we collect data on the agents' closed-loop behavior to successively improve the construction of terminal components for the local controllers. In particular, as we improve an estimate of the terminal cost (an approximate value function) and expand the terminal constraint set (the domain of this function), we iteratively improve closed-loop performance of the multi-agent system, while maintaining feasibility and finite-time convergence guarantees. 

The contribution of this work is twofold. We extend the method of Learning Model Predictive Control (LMPC) from \cite{rosolia2017learning} to the multi-agent case for dynamically decoupled agents under the minimum-time cost and coupled state constraints. In particular, we first propose a procedure for synthesizing the MPC terminal components using data from previous iterations of task execution. We then show that the resulting decentralized LMPC has the properties of persistent feasibility, finite-time closed-loop convergence to the goal state, and non-decreasing performance over iterations. We demonstrate the effectiveness of the decentralized method with a numerical example in the context of multi-vehicle collision avoidance, where we observe a significant reduction of computational effort compared to a centralized approach.
 
\section{Preliminaries} \label{sec:preliminaries}

\subsection{System Description} \label{sec:system_description}
For a system of $\numAgent$ agents, the set of indices $\{1, \dots, \numAgent\}$ is denoted as $\agentSet$. `$\preceq$' denotes an element-wise inequality. Consider the global nonlinear time-invariant discrete-time system composed of $\numAgent$ agents \vspace{-0.1cm}
\begin{align}
	x_{t+1} = f(x_{t}, u_{t}), \vspace{-0.1cm}
	\label{eq:nonlin_dyn_global}
\end{align} 
where the global state and input vectors at sampling time $t \geq 0$ are formed by stacking those from each agent into a single column, i.e. $x_t = \col_{\agentIdxA \in \agentSet}(x_{\agentIdxA,t}) = [x_{1,t}^{\top}, \dots, x_{M,t}^{\top}]^{\top} \in \mathbb{R}^n$ and $u_t = \col_{\agentIdxA \in \agentSet}(u_{\agentIdxA,t}) \in \mathbb{R}^m$, where $x_{\agentIdxA,t} \in \mathbb{R}^{n_i}$ and $u_{\agentIdxA,t} \in \mathbb{R}^{m_i}$. Each agent in the global system is subject to local state and input constraints, \vspace{-0.1cm}
\begin{equation}
    x_{\agentIdxA,t} \in \mathcal{X}_\agentIdxA \subseteq \mathbb{R}^{n_\agentIdxA}, \quad u_{\agentIdxA,t} \in \mathcal{U}_\agentIdxA \subseteq \mathbb{R}^{m_\agentIdxA}, \quad \forall t \geq 0.
    \label{eq:local_constr}
\end{equation}
These local constraint sets are assumed to be closed, compact, and include the goal states $x_{\agentIdxA,F}$ in their relative interiors. The global system is additionally subject to coupling constraints on the system state, \vspace{-0.1cm}
\begin{equation}
    g(x_t) \preceq 0, \quad \forall t \geq 0.
    \label{eq:coupled_constr}
\end{equation}

We assume that the agents are dynamically decoupled with continuous dynamics and locally stabilizable, which means that we can write (interchangeably with \eqref{eq:nonlin_dyn_global}), for all agents $\agentIdxA \in \agentSet$, the local dynamics as
\begin{align}
	x_{\agentIdxA,t+1} = f_\agentIdxA(x_{\agentIdxA,t}, u_{\agentIdxA,t}).
	\label{eq:nonlin_dyn_local}
\end{align}

\subsection{Control Objective}

The objective is to design a controller which drives the system state to a goal state $x_F$ by solving the following optimal control problem for the global system 
% \vspace{-0.25cm}
\begin{subequations} \label{eq:global_ocp}
    \begin{align}
       \min_{\mathbf{u}_T} \quad & \sum_{t=0}^{T}h(x_t, u_t) \label{eq:global_ihocp_cost}\\
        \text{s.t.} \quad & \ x_{t+1} = f(x_t, u_t), \quad \forall t \in \{0, \dots, T-1\} \\
        & \ x_0 = x_S, \ x_T = x_F \\
        & \ x_t \in \mathcal{X}, \ u_t \in \mathcal{U} \quad \forall t \in \{0, \dots, T\} \label{eq:global_ihocp_local_constrs}\\
        & \ g(x_t) \preceq 0, \quad \forall t \in \{0, \dots, T\}, \label{eq:global_ihocp_couple_constrs}
    \end{align}
\end{subequations}
where the goal state $x_F = \col_{\agentIdxA \in \agentSet}(x_{\agentIdxA,F})$ is assumed to be a feasible equilibrium state of \eqref{eq:nonlin_dyn_global}. $x_S= \col_{\agentIdxA \in \agentSet}(x_{\agentIdxA,S})$ is the initial condition of the system, and the state and input constraint sets in \eqref{eq:global_ihocp_local_constrs} are the Cartesian products of the local constraint sets in \eqref{eq:local_constr}. In this work, we are specifically interested in the minimum-time cost, which is defined as follows
% $h(x_t,u_t) = \mathds{1}(x_t;x_F)$, where $\mathds{1}(\ \cdot \ ;x_F)$ is the indicator function
\begin{align}
    h(x_t,u_t) = \mathds{1}(x;x_F) = \begin{cases}
    0 & \quad \text{if} \ x = x_F \\
    1 & \quad \text{otherwise}.
    \end{cases} \nonumber
    % \label{eq:indicator_fn_def}
\end{align}

\subsection{Learning Model Predictive Control} \label{sec:lmpc}

In this section, we briefly introduce and review the key concepts of LMPC \cite{rosolia2017learning,rosolia2019minimum}, which are extended upon in this work. LMPC was proposed as an iterative trajectory optimization method for single-agent nonlinear dynamical systems performing iterative tasks. In particular, the method provides a data-driven approach to terminal set and cost synthesis. We assume that an initial feasible input sequence $\mathbf{u}^0 = \{u_0^0, u_1^0, \dots, u_{T^0-1}^0\}$ and closed-loop state trajectory $\mathbf{x}^0 = \{x_0^0, x_1^0, \dots, x_{T^0}^0\}$ exists for \eqref{eq:nonlin_dyn_global} and is available at iteration $\iterIdx = 0$. $T^\iterIdx$ denotes the time at which the closed-loop system reaches the terminal state, i.e. $x_{T^\iterIdx}^\iterIdx = x_F$. We note that for iterative tasks, the initial condition of the system is assumed to be the same over iterations, i.e. $x_0^\iterIdx = x_S, \ \forall \iterIdx \geq 0$. 

LMPC solves a Finite Horizon Optimal Control Problem (FHOCP), which approximates \eqref{eq:global_ocp}, in a receding horizon fashion. Given the initial condition $x$ at time $t$ of iteration $\iterIdx$, the FHOCP is defined as
\begin{subequations}
    \begin{align}
        \min_{\mathbf{u}_{t,N}^{\iterIdx}} \ & \sum_{k=0}^{N-1}\mathds{1}(x_{k|t}^\iterIdx; x_F) + V^{\iterIdx-1}(x_{N|t}^\iterIdx) \label{eq:lmpc_cost} \\
        \text{s.t.} \ & \ x_{k+1|t}^\iterIdx = f(x_{k|t}^\iterIdx, u_{k|t}^\iterIdx), \ \forall k \in \mathbb{N}_{N-1} \\
        & \ x_{0|t}^\iterIdx = x \\
        & \ x_{k|t}^\iterIdx \in \mathcal{X}, \ u_{k|t}^\iterIdx \in \mathcal{U}, \ \forall k \in \mathbb{N}_{N-1} \label{eq:lmpc_local_constrs}\\
        & \ g(x_{k|t}^\iterIdx) \preceq 0, \ \forall k \in \mathbb{N}_{N-1} \label{eq:lmpc_coupled_constrs} \\
        & \ x_{N|t}^\iterIdx \in \mathcal{SS}^{\iterIdx-1}, \label{eq:lmpc_term_constr}
    \end{align}
\end{subequations}
where $x_{k|t}^{\iterIdx}$ and $u_{k|t}^{\iterIdx}$ denote the decision variables of the predicted state and input at the sampling time $t+k$ of iteration $\iterIdx$. The first input $u_{0|t}^{\iterIdx}$ is then applied to the system~\eqref{eq:nonlin_dyn_global}.

The terminal set at iteration $\iterIdx-1$ in \eqref{eq:lmpc_term_constr}, called a sampled safe set, is defined as
\begin{equation}
    \mathcal{SS}^{\iterIdx-1} = \left\{\bigcup_{\iterIdxB \in \mathcal{Q}^{\iterIdx-1}}\bigcup_{t = 0}^{T^\iterIdxB} x_t^\iterIdxB\right\},
    \label{eq:single_agent_ss}
\end{equation}
where $\mathcal{Q}^{\iterIdx-1} = \{\iterIdxB \in \{0,\dots,\iterIdx-1\} : x_{T^\iterIdxB}^p = x_F\}$ is the set of iteration indices up to iteration $\iterIdx-1$ where the goal state $x_F$ was successfully reached. The sampled safe set collects the closed-loop state trajectories from previous successful iterations, which implies that at iteration $\iterIdx$, $\forall x \in \mathcal{SS}^{\iterIdx}$, there exists a known and feasible sequence of inputs, with $x_0 = x$, such that $f(x_t,u_t) \in \mathcal{SS}^{\iterIdx}, \ \forall t \geq 0$. We note that by construction, $\mathcal{SS}^{\iterIdx}$ is a control invariant set.

For the terminal cost function in \eqref{eq:lmpc_cost}, an approximate value function is constructed which returns the minimum cost-to-go (over iterations 0 to $\iterIdx-1$) from each state in the safe set. The cost-to-go from the state at time $t$ along the
closed-loop trajectory $\mathbf{x}^\iterIdx$ with corresponding input sequence $\mathbf{u}^\iterIdx$ is defined as $J_{T^{\iterIdx}}^{\iterIdx}(x_t^{\iterIdx},t) = \sum_{k=t}^{T^{\iterIdx}}\mathds{1}(x_k^{\iterIdx}; x_F)$.

The approximate value function at iteration $\iterIdx-1$ is then
\begin{align}
    V^{\iterIdx-1}(x) = \begin{cases}
        \min_{(p,t) \in \mathcal{F}^{\iterIdx-1}(x)} J_{T^{\iterIdx}}^{\iterIdxB}(x,t) \ &\text{if} \ x \in \mathcal{SS}^{\iterIdx-1} \\
        + \infty \ &\text{otherwise}
    \end{cases}
    \label{eq:single_agent_value_func}
\end{align}
where $\mathcal{F}^{\iterIdx-1}(x) = \{(\iterIdxB,t) : x_t^\iterIdxB = x \ \text{and} \ x_t^\iterIdxB \in \mathcal{SS}^{\iterIdx-1}\}$ returns the set of iteration and time index pairs for the states in previous trajectories which are equal to $x$.

Using the properties of the constructed terminal components, the resulting scheme guarantees persistent feasibility of the FHOCP and finite-time closed-loop convergence of the closed-loop system. Non-decreasing performance  can also be shown over iterations of task execution.

\section{Problem Formulation} \label{sec:prob_form}

In this section, we extend the idea of LMPC from Sec.~\ref{sec:lmpc} to the multi-agent case. Specifically, our formulation leverages data from previous iterations of task execution to synthesize local controllers for each agent via decoupled FHOCPs. This allows for an entirely decentralized control scheme for the multi-agent system at each iteration. In the following, we describe the synthesis of the local terminal components, namely the time-varying sampled safe sets and approximate value function. We then present the resulting decoupled FHOCPs, which are solved in a receding horizon manner. Combining these elements, we arrive at a decentralized LMPC procedure for trajectory optimization of multi-agent systems.

\subsection{Time-Varying Sampled Safe Sets and Global Constraint Decomposition} \label{sec:local_tv_fi_set}

In this part, we address two challenges posed by a decentralized receding horizon approach. Namely 1) how global constraint satisfaction can be enforced through decentralized local constraint satisfaction, and 2) how feasibility over the entire task horizon can be maintained in a receding horizon implementation of \eqref{eq:global_ocp}, particularly when time-varying constraints are introduced. 

\emph{To address 1)}, we introduce the following assumption. Notice that the original global time-invariant constraints are transformed into local time-varying constraints.

\begin{assumption}
    At iteration $\iterIdx$, there exists a time-varying local decomposition of the global constraints: $g_{\agentIdxA,t}^{\iterIdx}(\cdot), \ \forall t \geq 0, \ \forall \agentIdxA \in \agentSet$, which can be constructed from feasible trajectories of the global system, At each time $t$, joint local constraint satisfaction is sufficient for global constraint satisfaction, i.e. $g_{\agentIdxA,t}^{\iterIdx}(x_{\agentIdxA,t}) \preceq 0, \ \forall \agentIdxA \in \agentSet \implies g(x_t) \preceq 0$.
    \label{asm:constraint_decomposition}
\end{assumption}

\begin{remark} \label{rem:asm1_rem}
    Given a feasible trajectory $\mathbf{x}_{\agentIdxA}^{\iterIdx}$, we may always obtain the decomposition $g_{\agentIdxA,t}^{\iterIdx}(x_{\agentIdxA}) = \|x_{\agentIdxA}-x_{\agentIdxA,t}^{\iterIdx}\|$ for each trajectory point in $\mathbf{x}_{\agentIdxA}^{\iterIdx}$, which satisfies the condition in Ass.~\ref{asm:constraint_decomposition}. In Sec.~\ref{sec:collision_avoidance_example}, we present a technique for constructing $g_{\agentIdxA,t}^{\iterIdx}(\cdot)$ in a less conservative manner.
\end{remark}

\emph{To address 2)}, constraint satisfaction of the FHOCPs over the entire task horizon is obtained by using control invariant sets. This was done in \cite{rosolia2017learning} via \eqref{eq:single_agent_ss}, which, recall, are control invariant sets by definition. We would like to achieve the same goal of using data to construct sets which grant the same properties to the decentralized scheme. However, in \eqref{eq:global_ocp} (and due to the constraint decomposition in Assumption~\ref{asm:constraint_decomposition}), we include time-varying constraints for which the classical definition of set invariance does not apply. For this reason, we do not collect all recorded states of the global system, which converge to $x_{F}$, into a single invariant set as in \eqref{eq:single_agent_ss}. Instead, we interpret the states of agent $\agentIdxA$ at time $t$ of previous successful iterations as a sampled subset of the $(\hat{T}^{\iterIdx-1}-t)$-step reachable set to $x_{\agentIdxA,F}$, where $\hat{T}^\iterIdx=\max_{\iterIdxB\in\{0,\dots,\iterIdx\}}(T_\agentIdxA^\iterIdxB)$. As such, for each agent $\agentIdxA\in\agentSet$ at iteration $\iterIdx$, we propose to construct time-varying sampled safe sets $\mathcal{SS}_{\agentIdxA,t}^{\iterIdx}$ using data from previous iterations of task execution, which are one-step reachable to $\mathcal{SS}_{\agentIdxA,t+1}^{\iterIdx}$ (and therefore $K$-step reachable to $x_{\agentIdxA,F}$ for some $K \in \{1,\dots, \hat{T}^{\iterIdx-1}-t\}$). Moreover, we require that the global coupling constraints are satisfied for all combinations of states in the constructed safe sets, i.e. $\forall t \geq 0$ and $\bar{x}_{\agentIdxA} \in \mathcal{SS}_{\agentIdxA,t}^{\iterIdx}$, $g(\bar{x})  \preceq 0$, where $\bar{x}=\col_{\agentIdxA \in \agentSet}(\bar{x}_{\agentIdxA})$. The following assumption allows us to start with non-empty safe sets at the first iteration.

\begin{assumption}
    At iteration $\iterIdx = 0$, an initial feasible input and state sequence, which converges to the goal state in $T_\agentIdxA^0$ steps, exists for each agent (\ref{eq:nonlin_dyn_local}). Denote this input sequence and state trajectory for agent $\agentIdxA \in \agentSet$ as $\mathbf{u}_{\agentIdxA}^{0}$ and $\mathbf{x}_{\agentIdxA}^{0}$. 
    % A decomposition of the global constraint exists such that $g_{\agentIdxA,t}^{0}(x_{\agentIdxA,t}^{0}) \preceq 0, \ \forall x_{\agentIdxA,t}^{0} \in \mathbf{x}_{\agentIdxA}^{0}$.
    \label{asm:init_feas_traj}
\end{assumption}

\begin{remark} \label{rem:asm1_rem}
    This assumption is reasonable as one may provide initial feasible trajectories through demonstration or by employing a conservative controller.
\end{remark}

Let $(\mathcal{D}_{x},\mathcal{D}_{u})$ be the dataset which collects the state trajectories $\mathbf{x}_{\agentIdxA}^{0} \dots, \mathbf{x}_{\agentIdxA}^{\iterIdx}$ and input sequences $\mathbf{u}_{\agentIdxA}^{0}, \dots, \mathbf{u}_{\agentIdxA}^{\iterIdx}$, and let $g_{\agentIdxA,t}^{\iterIdx}(\cdot)$ be a constraint decomposition which satisfies Assumption~\ref{asm:constraint_decomposition}. Then, initializing the recursive relationship with $\mathcal{SS}_{\agentIdxA,\hat{T}^{\iterIdx-1}}^{\iterIdx} = \{x_{\agentIdxA,F}\}$, we construct the time-varying sampled safe sets for each agent $\agentIdxA$ at time $t$ of iteration $\iterIdx$ as
% \begin{align}
%     \mathcal{SS}_{\agentIdxA,t}^{\iterIdx} = \bigcup_{\iterIdxB \in \{0,\dots,\iterIdx\}} \left\{ x_{\agentIdxA,k}^{\iterIdxB} \in \mathbf{x} : g_{\agentIdxA,t}^{\iterIdx}(x_{\agentIdxA,k}^{\iterIdxB}) \preceq 0 \right\},
%     \label{eq:sampled_safe_set_synth}
% \end{align}
\begin{align}
    \mathcal{SS}_{\agentIdxA,t}^{\iterIdx} = \left\{ x_{\agentIdxA,k}^{\iterIdxB} \in \mathbf{x} : \ g_{\agentIdxA,t}^{\iterIdx}(x_{\agentIdxA,k}^{\iterIdxB}) \preceq 0 \ \text{and} \right. \nonumber \\ 
    \left. f_{\agentIdxA}(x_{\agentIdxA,k}^{\iterIdxB},u_{\agentIdxA,k}^{\iterIdxB}) \in \mathcal{SS}_{\agentIdxA,t+1}^{\iterIdx} \right\}.
    \label{eq:sampled_safe_set_synth}
\end{align}

The time-varying sampled safe sets $\mathcal{SS}_{\agentIdxA,t}^{\iterIdx}$ collect the state trajectory points from previous successful iterations for which there exists a trajectory to the goal state that satisfies the global constraints. We show in Sec.~\ref{sec:decent_lmpc_props} that the sampled safe sets are a non-empty family of reachable sets to $x_{\agentIdxA,F}$. 

\begin{remark}
For successful iterations, i.e. where $x_{\agentIdxA,T_\agentIdxA^\iterIdx}^\iterIdx = x_{\agentIdxA,F}$, it is straightforward to accommodate task iterations of different lengths when constructing $\mathcal{SS}_{\agentIdxA,t}^{\iterIdx}$ for $t \in \{0, \dots, \max_{\iterIdxB\in\mathcal{I}}(T_\agentIdxA^\iterIdxB)\}$. Since $x_{\agentIdxA,F}$ is an equilibrium of system $\agentIdxA$, we may trivially apply the zero input to obtain $\mathcal{SS}_{\agentIdxA,t}^{\iterIdx} = \{x_{\agentIdxA,F}\}$ for all $t > T_\agentIdxA^{\iterIdx}$.
\end{remark}

\begin{remark} \label{rem:asm2_rem}
    How the constraint decomposition is constructed depends on the specific problem at hand. In the multi-vehicle collision avoidance example shown in Sec.~\ref{sec:collision_avoidance_example}, we propose a procedure to construct hyperplanes which separate the position states of the sampled safe sets for pairs of agents. 
    % If a construction satisfying (\ref{eq:decomp_construct}) cannot be found, then we begin to successively shrink the horizons $\mathcal{T}_t$ and $\mathcal{I}$ until we arrive at a valid constraint decomposition and the corresponding sampled safe sets.
\end{remark}

\subsection{Value Function Approximation} \label{sec:local_tv_val_func_approx}

For an input sequence $\mathbf{u}_{\agentIdxA}^{\iterIdx}$ and closed-loop state trajectory $\mathbf{x}_{\agentIdxA}^{\iterIdx}$ of agent $\agentIdxA$ at iteration $\iterIdx$ with length $T_\agentIdxA^\iterIdx$, we define the cost-to-go from the state $x_{\agentIdxA,t}^{\iterIdx}$ at time $t$ along the closed-loop trajectory to be
\begin{equation}
    J_{\agentIdxA,T_\agentIdxA^\iterIdx}^{\iterIdx}(x_{\agentIdxA,t}^{\iterIdx}, t) = \sum_{k=t}^{T_\agentIdxA^\iterIdx} \mathds{1}(x_{\agentIdxA,k}^{\iterIdx}, x_{\agentIdxA,F}) = T_\agentIdxA^\iterIdx - t.
    \label{eq:cost_to_go}
\end{equation} 
The iteration cost for agent $\agentIdxA$ at iteration $\iterIdx$ can then be written as $J_{\agentIdxA,T_\agentIdxA^\iterIdx}^{\iterIdx}(x_{\agentIdxA,S},0)$, recalling that the system is initialized at the same state $x_S$ for each iteration $q$. This leads to the definition of the approximate value function $V_{\agentIdxA}^{\iterIdx}(\cdot,t)$ at time $t$ over the sampled safe set $\mathcal{SS}_{\agentIdxA,t}^{\iterIdx}$ as 
\begin{align}
    V_{\agentIdxA}^{\iterIdx}(x_{\agentIdxA},t) = \begin{cases}
        \min_{(\iterIdxB,k) \in \mathcal{F}_{\agentIdxA,t}^{\iterIdx}(x_{\agentIdxA})} J_{\agentIdxA,T_{\agentIdxA}^{\iterIdxB}}^{\iterIdxB}(x_{\agentIdxA},k), & \text{if} \ x_{\agentIdxA} \in \mathcal{SS}_{\agentIdxA,t}^{\iterIdx} \\
        +\infty, & \text{otherwise,}
    \end{cases}
    \label{eq:approx_val_func}
\end{align}
where $\mathcal{F}_{\agentIdxA,t}^{\iterIdx}(x_{\agentIdxA})$ is defined in the same way as in (\ref{eq:single_agent_value_func}).
%\begin{equation}
%     \mathcal{F}_{\agentIdxA,t}^{\iterIdx}(x_{\agentIdxA}) = \left\{(\iterIdxB,k) \in \mathbb{N}_\iterIdx \times \mathbb{N} : x_{\agentIdxA,k}^{\iterIdxB}=x_{\agentIdxA} \ \text{and} \ x_{\agentIdxA,k}^{\iterIdxB} \in \mathcal{SS}_{\agentIdxA,t}^{\iterIdx} \right\}
%     \label{eq:val_func_idxs}
% \end{equation}
% like in (\ref{eq:single_agent_value_func}), returns the set of iteration and time index pairs for the states in previous trajectories which are equal to the value of $x_{\agentIdxA}$. 
Thus, for a state $x_{\agentIdxA}$ whose value is equal to some state in $\mathcal{SS}_{\agentIdxA,t}^{\iterIdx}$, $V_{\agentIdxA}^{\iterIdx}(x_{\agentIdxA},t)$ returns the minimum cost-to-go over trajectories which pass through $x_{\agentIdxA}$ in $\mathcal{SS}_{\agentIdxA,k}^{\iterIdx}$ for $k \geq t$. Note that $V_\agentIdxA^\iterIdx$ is discontinuous as its domain is a finite set of points.

\subsection{The Finite Horizon Optimal Control Problem} \label{sec:fhocp}

% Synthesis of the terminal components is achieved using the function \texttt{synthesizeFHOCP} as summarized in Alg.~\ref{alg:fhocp_synthesis}, which acts as a global coordinator between iterations of task execution in our decentralized framework. As such, after performing the task at iteration $\iterIdx-1$, we obtain for agent $\agentIdxA$ the decomposition of the global constraint $g_{\agentIdxA,t}^{\iterIdx-1}(\cdot)$, the sampled safe sets $\mathcal{SS}_{\agentIdxA,t}^{\iterIdx-1}$, and the approximate value function $V_{\agentIdxA}^{\iterIdx-1}(\cdot,t)$, which are constructed using data from iterations up to and including $\iterIdx-1$.

Synthesis of the terminal components is achieved using the function \texttt{synthesizeFHOCP} as summarized in Alg.~\ref{alg:fhocp_synthesis}, which acts as a global coordinator between iterations of task execution in our decentralized framework. In this algorithm, we introduce the design parameters $\bar{t},\underline{t} \geq 0$ and $\underline{q} \in \{0,\dots,q\}$ to reduce the computational burden of safe set construction and to make satisfaction of one-step reachability from $\mathcal{SS}_{\agentIdxA,t}^{\iterIdx}$ to $\mathcal{SS}_{\agentIdxA,t+1}^{\iterIdx}$ trivial. In particular, we construct the set $\mathcal{I}$ which contains the iteration indices of the $\underline{\iterIdx}$ most recent successful iterations and the sliding time range $\mathcal{T}_t = \{\max(t-\underline{t},0), \dots, t, \dots, t+\bar{t}\}$ and obtain the \textit{candidate} safe sets for time $t$ (line 6). After a constraint decomposition is constructed (line 8, see Rem.~\ref{rem:asm2_rem} and Sec.~\ref{sec:collision_avoidance_example}), the candidate safe sets are checked for constraint satisfaction (line 10). In the case where any constraint is violated, the design parameters are updated (lines 11-15) and new candidate safe sets are constructed (line 16). This procedure is repeated until safe sets satisfying \eqref{eq:sampled_safe_set_synth} are found. We will show in Sec.~\ref{sec:decent_lmpc_props} that the procedure described in Alg. \ref{alg:fhocp_synthesis} results in non-empty safe sets which satisfy the coupled constraints and are reachable to $x_{\agentIdxA,F}$. As such, after performing the task at iteration $\iterIdx-1$, we obtain for agent $\agentIdxA$ the decomposition of the global constraint $g_{\agentIdxA,t}^{\iterIdx-1}(\cdot)$, the sampled safe sets $\mathcal{SS}_{\agentIdxA,t}^{\iterIdx-1}$, and the approximate value function $V_{\agentIdxA}^{\iterIdx-1}(\cdot,t)$, which are constructed using data from iterations 0 to $\iterIdx-1$.

To obtain the control action for agent $\agentIdxA$ at sampling time $t$ of iteration $\iterIdx$, we solve the following decoupled FHOCP with horizon length $N$ and initial condition $x_{\agentIdxA}$.
\begin{subequations}\label{eq:local_fhocp}
    \begin{align}
         \mathcal{P}_{\agentIdxA,N}^{\iterIdx}(x_{\agentIdxA},t)=\min_{\mathbf{u}_{\agentIdxA,t}^{\iterIdx}} ~ &\sum_{k=0}^{N-1} \! \mathds{1}(x_{\agentIdxA,k|t}^{\iterIdx};x_{\agentIdxA,F}) \!+\! V_\agentIdxA^{\iterIdx-1}(x_{\agentIdxA,N|t}^{\iterIdx},t\!+\!N) \nonumber\\
        \text{s.t.} ~~ & \ x_{\agentIdxA,k+1|t}^{\iterIdx} = f_\agentIdxA(x_{\agentIdxA,k|t}^{\iterIdx}, u_{\agentIdxA,k|t}^{\iterIdx}),  \label{eq:local_fhocp_dyn} \\
        & \ x_{\agentIdxA,0|t}^{\iterIdx} = x_{\agentIdxA} \label{eq:local_fhocp_init} \\
        & \ x_{\agentIdxA,k|t}^{\iterIdx} \in \mathcal{X}_\agentIdxA, \ u_{\agentIdxA,k|t}^{\iterIdx} \in \mathcal{U}_\agentIdxA,  \label{eq:local_fhocp_local_constr} \\
        & \ g_{\agentIdxA,t+k}^{\iterIdx-1}(x_{\agentIdxA,k|t}^{\iterIdx}) \preceq 0,  \label{eq:local_fhocp_decoup_constr} \\
        & \ x_{\agentIdxA,N|t}^{\iterIdx} \in \mathcal{SS}_{\agentIdxA,t+N}^{\iterIdx-1}\label{eq:local_fhocp_term_set}\\
        & \ \forall k \in \{0,\dots,N-1\} \nonumber
    \end{align}
\end{subequations}
% we denote \eqref{eq:local_fhocp} as $\mathcal{P}_{\agentIdxA,N}^{\iterIdx}(x_{\agentIdxA},t)$,
where \eqref{eq:local_fhocp_dyn} and \eqref{eq:local_fhocp_init} represent the system dynamics and initial condition, respectively. The local state and input constraints are given in \eqref{eq:local_fhocp_local_constr}. \eqref{eq:local_fhocp_decoup_constr} enforces satisfaction of the decomposed constraint for each agent, which is sufficient for global constraint satisfaction. Finally, \eqref{eq:local_fhocp_term_set} ensures that the terminal state is a member of the time-varying sampled safe set $\mathcal{SS}_{\agentIdxA,t+N}^{\iterIdx-1}$. We denote the locally optimal value of the FHOCP cost in \eqref{eq:local_fhocp} as $J_{\agentIdxA,N}^{*,\iterIdx}(x_{\agentIdxA},t)$.

Let $\mathbf{u}_{\agentIdxA,t}^{*,\iterIdx}(x_{\agentIdxA}) = \{u_{\agentIdxA,0|t}^{*,\iterIdx}(x_{\agentIdxA}), \dots, u_{\agentIdxA,N-1|t}^{*,\iterIdx}(x_{\agentIdxA})\}$ denote the input sequence which minimizes (\ref{eq:local_fhocp}) for initial state $x_{\agentIdxA}$ at sampling time $t$, and $\mathbf{x}_{\agentIdxA,t}^{*,\iterIdx} = \{x_{\agentIdxA,0|t}^{*,\iterIdx}, \dots, x_{\agentIdxA,N|t}^{*,\iterIdx}\}$ be the corresponding state trajectory beginning at $x_{\agentIdxA,0|t}^{*,\iterIdx} = x_{\agentIdxA}$. In typical \textit{receding horizon} fashion, for each agent $\agentIdxA \in \agentSet$, the first element of $\mathbf{u}_{\agentIdxA,t}^{*,\iterIdx}(x_{\agentIdxA})$ is applied to system (\ref{eq:nonlin_dyn_local}), which defines the state feedback policy 
\begin{equation}
    u_{\agentIdxA,t}^{\iterIdx} = \kappa_{\agentIdxA}^{\iterIdx}(x_{\agentIdxA},t) \doteq u_{\agentIdxA,0|t}^{*,\iterIdx}(x_{\agentIdxA}),
    \label{eq:lmpc_feedback_control_law}
\end{equation}
with $x_{\agentIdxA} = x_{\agentIdxA,t}^{\iterIdx}$. This results in the closed-loop state trajectory $\mathbf{x}_{\agentIdxA}^{\iterIdx} = \{x_{\agentIdxA,0}^{\iterIdx}, x_{\agentIdxA,1}^{\iterIdx}, \dots, x_{\agentIdxA,t}^{\iterIdx}, \dots\}$ and input sequence $\mathbf{u}_{\agentIdxA}^{\iterIdx} = \{u_{\agentIdxA,0}^{\iterIdx}, u_{\agentIdxA,1}^{\iterIdx}, \dots, u_{\agentIdxA,t}^{\iterIdx}, \dots\}$ for agent $\agentIdxA$ at iteration $\iterIdx$. We will show that the closed-loop state trajectory under \eqref{eq:lmpc_feedback_control_law} converges to the goal state $x_{\agentIdxA,F}$ in finite-time.
% \begin{align}
%     \mathbf{x}_{\agentIdxA}^{\iterIdx} &= \{x_{\agentIdxA,0}^{\iterIdx}, x_{\agentIdxA,1}^{\iterIdx}, \dots, x_{\agentIdxA,t}^{\iterIdx}, \dots\}, \nonumber \\
%     \mathbf{u}_{\agentIdxA}^{\iterIdx} &= \{u_{\agentIdxA,0}^{\iterIdx}, u_{\agentIdxA,1}^{\iterIdx}, \dots, u_{\agentIdxA,t}^{\iterIdx}, \dots\}. \label{eq:cl_traj}
% \end{align}
% \begin{remark}
% In practical applications, the closed-loop trajectories are of finite length. As in \cite{rosolia2017learning}, and is common in literature, we adopt the infinite time formulation for simplicity.
% \label{rem:finite_length}
% \end{remark}

\setlength{\textfloatsep}{3pt}% Remove \textfloatsep
\setlength{\floatsep}{10pt}% Remove \textfloatsep
\begin{algorithm}[t!]
    \SetAlgoLined
	\KwIn{$\mathcal{D}_{x}$, $\mathcal{D}_{u}$, $\mathbf{t}$, $\theta= \{\underline{\iterIdx},\underline{t},\bar{t}\}$}
	$\check{\iterIdx} \leftarrow \underline{\iterIdx}$\;
	$\mathcal{I} \leftarrow \{\max(\iterIdx - \underline{\iterIdx}), \dots, \iterIdx\}$\;
% 	$\tilde{t} \leftarrow \max\{t^{\iterIdxB} \in \mathbf{t} : \iterIdxB \in \mathcal{I}\}$\;
	\For{$t \in \{0,\dots,\max_{\iterIdxB\in\mathcal{I}}(T_\agentIdxA^{\iterIdxB})\}$}{
	    \For{$\agentIdxA \in \agentSet$}{
	       % $\mathcal{T}_{\agentIdxA,t}^{\iterIdx} \leftarrow \{\max(t-\underline{t},0), \dots, t, \dots, t+\bar{t}\}$\;
	       $\mathcal{T}_{t} \leftarrow \{\max(t-\underline{t},0), \dots, t, \dots, t+\bar{t}\}$\;
	        $\mathcal{SS}_{\agentIdxA,t}^{\iterIdx} \leftarrow \bigcup_{\iterIdxB \in \mathcal{I}} \left\{ x_{\agentIdxA,k}^{\iterIdxB} \in \mathcal{D}_{x} : k \in \mathcal{T}_{t} \right\}$\;
	    }
	    $\{g_{\agentIdxA,t}^{\iterIdx}(\cdot)\}_{\agentIdxA \in \agentSet} \leftarrow$ obtain decomposition according to Ass.~\ref{asm:constraint_decomposition}\;
	}
	 
	\While{any constraint in $g_{\agentIdxA,t}^{\iterIdx}(x)$ is violated for any $x \in \mathcal{SS}_{\agentIdxA,t}^{\iterIdx}, \ \agentIdxA \in \mathcal{M}, \ t \in \{0,\dots,\max_{\iterIdxB\in\mathcal{I}}(T_\agentIdxA^{\iterIdxB})\}$}{
	    $\underline{\iterIdx} \leftarrow \underline{\iterIdx} - 1$\;
	    \If{$\underline{\iterIdx} = 0$}{
	        $\bar{t} \leftarrow \max(\bar{t}-1,0)$, $\underline{t} \leftarrow \max(\underline{t}-1,0)$\;
	        $\underline{\iterIdx} \leftarrow \check{\iterIdx}$\;
	    }
	    Repeat lines 2-9 with updated $\underline{t}$, $\bar{t}$ and $\underline{\iterIdx}$\;
	}
	$V_{\agentIdxA}^{\iterIdx}(\cdot,t) \leftarrow$ compute as in (\ref{eq:cost_to_go}) and (\ref{eq:approx_val_func}) $\forall x_{\agentIdxA}^{\iterIdx} \in \mathcal{SS}_{\agentIdxA,t}^{\iterIdx}$\;
	\KwOut{$\{\mathcal{SS}_{\agentIdxA,t}^{\iterIdx},V_{\agentIdxA}^{\iterIdx}(\cdot,t),g_{\agentIdxA,t}^{\iterIdx}(\cdot)\}$}
	\caption{\texttt{synthesizeFHOCP}}
	\label{alg:fhocp_synthesis}
\end{algorithm} 

\begin{algorithm}[t!]
	\SetAlgoLined
	\KwIn{$Z$, $\mathbf{x}^{0}$, $\mathbf{u}^{0}$, $\{T_\agentIdxA^0\}_{\agentIdxA\in\agentSet}$, $\theta$}
	$\mathcal{D}_{x} \leftarrow \{\mathbf{x}^{0}\}$, $\mathcal{D}_{u} \leftarrow \{\mathbf{u}^{0}\}$, $\mathbf{t} \leftarrow \{T_\agentIdxA^0\}_{\agentIdxA\in\agentSet}$\;
	$\{\mathcal{SS}_{\agentIdxA,t}^{0}, V_{\agentIdxA}^{0}(\cdot,t), g_{\agentIdxA,t}^{0}(\cdot)\} \leftarrow$ \texttt{synthesizeFHOCP}$(\mathcal{D}_{x}, \mathcal{D}_{u}, \mathbf{t}, \theta)$\;
	\For{$\iterIdx \in \{1, \dots, Z\}$}{
	    \For{$\agentIdxA \in \agentSet$ in parallel}{
	        $t \leftarrow 0$, $x_{\agentIdxA,0}^{\iterIdx} \leftarrow x_{\agentIdxA,S}$, $\mathbf{x}_{\agentIdxA}^{\iterIdx} \leftarrow \{x_{\agentIdxA,S}\}$, $\mathbf{u}_{\agentIdxA}^{\iterIdx} \leftarrow \emptyset$\;
    	   % \While{$\|x_{\agentIdxA,t}^{\iterIdx} - x_{\agentIdxA,F}\| > \epsilon$}{
    	   \While{$x_{\agentIdxA,t}^{\iterIdx} \neq x_{\agentIdxA,F}$}{
        	    $\kappa_{\agentIdxA}^{\iterIdx}(\cdot,t) \leftarrow$ solve $\mathcal{P}_{\agentIdxA,N}^{\iterIdx}(x_{\agentIdxA,t}^{\iterIdx},t)$\;
        	    $x_{\agentIdxA,t+1}^{\iterIdx} \leftarrow$ apply $u_{\agentIdxA,t}^{\iterIdx} = \kappa_{\agentIdxA}^{\iterIdx}(x_{\agentIdxA,t}^{\iterIdx},t)$ and measure state\;
        	    $\mathbf{x}_{\agentIdxA}^{\iterIdx} \leftarrow \mathbf{x}_{\agentIdxA}^{\iterIdx} \cup \{x_{\agentIdxA,t+1}^{\iterIdx}\}$, $\mathbf{u}_{\agentIdxA}^{\iterIdx} \leftarrow \mathbf{u}_{\agentIdxA}^{\iterIdx} \cup \{u_{\agentIdxA,t}^{\iterIdx}\}$\;
        	    $t \leftarrow t + 1$\;
    	    }
    	    $\mathcal{D}_{x} \leftarrow \mathcal{D}_{x} \cup \{\mathbf{x}_{\agentIdxA}^{\iterIdx}\}$, $\mathcal{D}_{u} \leftarrow \mathcal{D}_{u} \cup \{\mathbf{u}_{\agentIdxA}^{\iterIdx}\}$\;
    	    $T_{\agentIdxA}^{\iterIdx} \leftarrow t$
    	}
    	$\mathbf{t} \leftarrow \mathbf{t} \cup \{T_{\agentIdxA}^{\iterIdx}\}_{\agentIdxA \in \agentSet}$\;
	    $\{\mathcal{SS}_{\agentIdxA,t}^{\iterIdx}, V_{\agentIdxA}^{\iterIdx}(\cdot,t), g_{\agentIdxA,t}^{\iterIdx}(\cdot)\} \leftarrow$ \texttt{synthesizeFHOCP}$(\mathbf{x}, \mathbf{u}, \mathbf{t}, \theta)$\;
	}
	\caption{Decentralized Learning Model Predictive Control}
	\label{alg:decent_lmpc}
\end{algorithm}

\begin{remark}
Due to the construction of the sampled safe sets as collections of discrete trajectory points, $\mathcal{P}_{\agentIdxA,N}^{\iterIdx}(x_{\agentIdxA},t)$ is a mixed integer nonlinear program, which can be computationally intensive to solve. However, the minimum-time formulation can be exploited for computational efficiency, e.g. by parallelization and branch and bound methods. Certain nonlinear systems may also admit a convex relaxation of the sampled safe set and approximate value function while maintaining performance guarantees \cite{rosolia2019minimum}.
\end{remark}

\subsection{Decentralized Learning Model Predictive Control} \label{sec:decent_lmpc}

The resulting iterative LMPC scheme for the multi-agent system is described in Alg.~\ref{alg:decent_lmpc}, where the for loop beginning at line 3 corresponds to the iterations of the decentralized LMPC, the for loop over agents from lines 4 to 14 may be executed in an entirely decentralized manner with no communication between agents, and the while loop from lines 6 to 11 iterates over time steps. Note that we assume that there is no mismatch between the model used in \eqref{eq:local_fhocp_dyn} and the system on which the policy is applied (line 8). The inputs to Alg.~\ref{alg:decent_lmpc} are $Z \geq 1$: the number of LMPC iterations, $\mathbf{x}^{0}$ and $\mathbf{u}^{0}$: the initial feasible state and input sequences, $\{T_\agentIdxA^0\}_{\agentIdxA\in\agentSet}$: the lengths of the initial trajectories, and $\theta$ which contains the design parameters for \texttt{synthesizeFHOCP}. It is implicitly assumed that each iteration is successful. We will show that this is true in Sec.~\ref{sec:decent_lmpc_props}.

\section{Properties of the Decentralized LMPC} \label{sec:decent_lmpc_props}
In this section, we show that for each iteration $\iterIdx$, the sampled safe sets as constructed in (\ref{eq:sampled_safe_set_synth}) have the property of reachability to $x_{\agentIdxA,F}$, which is sufficient for persistent feasibility of the decentralized LMPC over the entire task horizon for all iterations $\iterIdx$. Furthermore, we show that the closed-loop system converges to the goal state in finite time, and that task performance is non-decreasing over iterations.

\begin{proposition} \label{prop:backwards_reachable}
    Let Assumption~\ref{asm:init_feas_traj} hold, then for agent $\agentIdxA$ at iteration $\iterIdx$, the sampled safe sets as constructed in (\ref{eq:sampled_safe_set_synth}) using Alg.~\ref{alg:fhocp_synthesis} and the dataset $(\mathbf{x},\mathbf{u})$ are reachable to $x_{\agentIdxA,F}$ and satisfy the decomposed constraints $g_{\agentIdxA,t}^\iterIdx(x) \succeq 0$, $\forall x \in \mathcal{SS}_{\agentIdxA,t}^{\iterIdx}$.
\end{proposition}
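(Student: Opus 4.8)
The plan is to prove the two assertions — satisfaction of the decomposed constraints and reachability to $x_{\agentIdxA,F}$ — by a single backward induction on the time index $t$, running from $\hat{T}^{\iterIdx-1}$ down to $0$, since the recursion \eqref{eq:sampled_safe_set_synth} links the safe set at time $t$ to the one at time $t+1$. I would treat the sets as \emph{defined} by \eqref{eq:sampled_safe_set_synth} for the core argument, and separately verify that the sets actually assembled by Alg.~\ref{alg:fhocp_synthesis} satisfy \eqref{eq:sampled_safe_set_synth}. Along the way I would also record non-emptiness of each $\mathcal{SS}_{\agentIdxA,t}^{\iterIdx}$, which is implicitly needed for the statement to be meaningful and later for feasibility of \eqref{eq:local_fhocp}.

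Constraint satisfaction is essentially immediate from the construction: by \eqref{eq:sampled_safe_set_synth}, and equivalently by the terminating condition of the \texttt{while} loop in Alg.~\ref{alg:fhocp_synthesis}, every point $x_{\agentIdxA,k}^{\iterIdxB}$ retained in $\mathcal{SS}_{\agentIdxA,t}^{\iterIdx}$ satisfies $g_{\agentIdxA,t}^{\iterIdx}(x_{\agentIdxA,k}^{\iterIdxB}) \preceq 0$, and with Assumption~\ref{asm:constraint_decomposition} this further gives $g(\bar{x})\preceq 0$ for every $\bar{x}=\col_{\agentIdxA\in\agentSet}(\bar{x}_{\agentIdxA})$ with $\bar{x}_{\agentIdxA}\in\mathcal{SS}_{\agentIdxA,t}^{\iterIdx}$. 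For non-emptiness I would invoke Assumption~\ref{asm:init_feas_traj} together with Remark~\ref{rem:asm1_rem}: the always-available decomposition $g_{\agentIdxA,t}^{\iterIdx}(x_{\agentIdxA})=\|x_{\agentIdxA}-x_{\agentIdxA,t}^{\iterIdxB}\|$ around a recorded feasible trajectory retains at least the corresponding trajectory points (whose constraint values are $0$), so the \texttt{while} loop of Alg.~\ref{alg:fhocp_synthesis} always terminates with a valid decomposition and non-empty safe sets; moreover, since $x_{\agentIdxA,F}$ is an equilibrium of \eqref{eq:nonlin_dyn_local}, the zero-input extension makes $x_{\agentIdxA,F}$ available, consistent with the base $\mathcal{SS}_{\agentIdxA,\hat{T}^{\iterIdx-1}}^{\iterIdx}=\{x_{\agentIdxA,F}\}$ of the recursion.

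For reachability, the base case $t=\hat{T}^{\iterIdx-1}$ is trivial since $\{x_{\agentIdxA,F}\}$ is $0$-step reachable to $x_{\agentIdxA,F}$. For the inductive step, assume every $\bar{x}\in\mathcal{SS}_{\agentIdxA,t+1}^{\iterIdx}$ admits a feasible input sequence of length at most $\hat{T}^{\iterIdx-1}-(t+1)$ steering it to $x_{\agentIdxA,F}$ while respecting \eqref{eq:local_fhocp_local_constr} and the decomposed constraint along the way. Pick any $x\in\mathcal{SS}_{\agentIdxA,t}^{\iterIdx}$; by \eqref{eq:sampled_safe_set_synth} there is a recorded pair $(x,u_{\agentIdxA,k}^{\iterIdxB})$ with $x=x_{\agentIdxA,k}^{\iterIdxB}\in\mathcal{X}_{\agentIdxA}$, $u_{\agentIdxA,k}^{\iterIdxB}\in\mathcal{U}_{\agentIdxA}$, $g_{\agentIdxA,t}^{\iterIdx}(x)\preceq 0$, and $f_{\agentIdxA}(x,u_{\agentIdxA,k}^{\iterIdxB})=x_{\agentIdxA,k+1}^{\iterIdxB}\in\mathcal{SS}_{\agentIdxA,t+1}^{\iterIdx}$. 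Applying $u_{\agentIdxA,k}^{\iterIdxB}$ and concatenating it with the feasible trajectory supplied by the induction hypothesis yields a feasible input sequence driving $x$ to $x_{\agentIdxA,F}$ in at most $\hat{T}^{\iterIdx-1}-t$ steps, which establishes reachability at time $t$ and closes the induction.

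The step I expect to be the main obstacle is reconciling the clean recursion \eqref{eq:sampled_safe_set_synth} used above with the set actually produced by Alg.~\ref{alg:fhocp_synthesis}, which is assembled from the sliding window $\mathcal{T}_{t}=\{\max(t-\underline{t},0),\dots,t,\dots,t+\bar{t}\}$ over the iteration set $\mathcal{I}$ rather than by explicitly filtering on one-step reachability. Here I would show that $k\in\mathcal{T}_{t}$ implies $k+1\in\mathcal{T}_{t+1}$ by a short case check on $\max(\cdot,0)$, so the successor $x_{\agentIdxA,k+1}^{\iterIdxB}$ of any retained trajectory point is itself retained at time $t+1$ — with the boundary case $k=T_{\agentIdxA}^{\iterIdxB}$ handled by the zero-input equilibrium extension to $x_{\agentIdxA,F}$ — which is precisely why the design parameters $\bar{t},\underline{t},\underline{\iterIdx}$ make one-step reachability "trivial". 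I would also verify that the \texttt{while} loop terminates, since $\underline{\iterIdx},\underline{t},\bar{t}$ are monotonically decreased and bounded below by $0$ and Remark~\ref{rem:asm1_rem} certifies a feasible configuration is eventually reached; hence the construction is well defined and the proposition follows.
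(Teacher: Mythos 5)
Your proof is correct and follows essentially the same route as the paper: the paper likewise assumes the \texttt{while} loop of Alg.~\ref{alg:fhocp_synthesis} has terminated (so constraint satisfaction holds by construction), and establishes reachability by observing that the sliding window $\mathcal{T}_t$ shifts into $\mathcal{T}_{t+1}$ (via the same case split on the $\max$), so each retained trajectory point's recorded successor and input sequence from a successful iteration drive it to $x_{\agentIdxA,F}$. Your write-up is somewhat more explicit than the paper's on the trajectory-endpoint boundary case, non-emptiness, and termination of the \texttt{while} loop, but these are refinements of, not departures from, the paper's argument.
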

\begin{proof}
We first assume that $\underline{t}$, $\bar{t}$, and $\mathcal{I}$ are chosen according to Alg. \ref{alg:fhocp_synthesis} such that the constructed sampled safe sets satisfy the decomposed constraints for all sample times $t$. Now, at time $t$ and $t+1$, we obtain the time ranges $\mathcal{T}_t = \{\max(t-\underline{t},0), \dots, t, \dots, t+\bar{t}\}$ and $\mathcal{T}_{t+1} = \{\max(t-\underline{t}+1,0), \dots, t+1, \dots, t+\bar{t}+1\}$, respectively. For the case when $0 \leq t \leq \underline{t}-1$, i.e. both lower limits evaluate to zero, the sampled safe sets at time $t$ and $t+1$ contain $\{x_{\agentIdxA,0}^{\iterIdxB}, \dots, x_{\agentIdxA,t+\bar{t}}^{\iterIdxB}\}$ and $\{x_{\agentIdxA,0}^{\iterIdxB}, \dots, x_{\agentIdxA,t+\bar{t}+1}^{\iterIdxB}\}$, respectively, for all $\iterIdxB \in \mathcal{I}$. Recall that $\mathcal{I}$ contains indices corresponding to successful iterations, i.e. $x_{\agentIdxA,T_{\agentIdxA}^\iterIdxB}^{\iterIdxB} = x_{\agentIdxA,F}, \ \forall \iterIdxB \in \mathcal{I}$. Moreover, by Assumption~\ref{asm:init_feas_traj}, $\mathcal{I}$ must be non-empty. By the fact that the feasible input sequence $\{u_{\agentIdxA,0}^{\iterIdxB}, \dots, u_{\agentIdxA,t+\bar{t}}^{\iterIdxB}\}$ exists in the dataset $\mathbf{u}$, such that $x_{\agentIdxA,k+1}^{\iterIdxB} = f_{\agentIdxA}(x_{\agentIdxA,k}^{\iterIdxB},u_{\agentIdxA,k}^{\iterIdxB}), \ \forall k \in \{0, \dots, t+\bar{t}\}$, we see that the property of reachability to $x_{\agentIdxA,F}$ holds. For the case when $t \geq \underline{t}$, the sampled safe sets at time $t$ and $t+1$ contain $\{x_{\agentIdxA,t-\underline{t}}^{\iterIdxB}, \dots, x_{\agentIdxA,t+\bar{t}}^{\iterIdxB}\}$ and $\{x_{\agentIdxA,t-\underline{t}+1}^{\iterIdxB}, \dots, x_{\agentIdxA,t+\bar{t}+1}^{\iterIdxB}\}$. By the same argument as before, we see that reachability holds. We conclude that the sampled safe sets for agent $\agentIdxA$ at iteration $\iterIdx$ are reachable to $x_{\agentIdxA,F}$ and satisfy the decomposed constraints for all $t \geq 0$.
\end{proof}

\begin{proposition} \label{prop:finite_time_convergence}
Let Assumption~\ref{asm:init_feas_traj} hold and assume that $\mathcal{P}_{\agentIdxA,N}^{\iterIdx}(x_{\agentIdxA,t}^\iterIdx,t)$ defined in \eqref{eq:local_fhocp} is feasible for agent $\agentIdxA$ at time $t$ and iteration $\iterIdx$ for some $x_{\agentIdxA,t}^\iterIdx \in \mathcal{X}_\agentIdxA$. If $x_{\agentIdxA,F} \in \mathcal{SS}_{\agentIdxA,t+N}^{\iterIdx-1}$, then the system \eqref{eq:nonlin_dyn_local} in closed-loop with \eqref{eq:local_fhocp} and \eqref{eq:lmpc_feedback_control_law} converges in at most $t+\bar{T}$ steps to $x_{\agentIdxA,F}$, where $\bar{T} = N+T_\agentIdxA^{\iterIdxB^*}-k^*$ and $(k^*,\iterIdxB^*) = \argmin_{k,p} T_\agentIdxA^\iterIdxB$ subject to the constraint $x_{\agentIdxA,N|t}^{*,\iterIdx}=x_{\agentIdxA,k}^{\iterIdxB}$.
\end{proposition}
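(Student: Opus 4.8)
The plan is to run the standard LMPC argument adapted to the decoupled, time-varying setting: (i) establish recursive feasibility of $\mathcal{P}_{\agentIdxA,N}^{\iterIdx}$ along the closed loop by exhibiting an explicit feasible candidate at time $t+1$ obtained from the optimizer at time $t$; (ii) use that candidate to show the optimal FHOCP cost strictly decreases by at least one at every closed-loop step until $x_{\agentIdxA,F}$ is reached; and (iii) conclude from boundedness of the nonnegative integer-valued cost that the goal is reached within $\bar T$ steps.

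\emph{Step (i): recursive feasibility.} Fix agent $\agentIdxA$ and suppose $\mathcal{P}_{\agentIdxA,N}^{\iterIdx}(x_{\agentIdxA,t}^{\iterIdx},t)$ is feasible with optimizer $\mathbf{u}_{\agentIdxA,t}^{*,\iterIdx}, \mathbf{x}_{\agentIdxA,t}^{*,\iterIdx}$. Let $(\iterIdxB^*,k^*)$ attain the minimum defining $V_{\agentIdxA}^{\iterIdx-1}(x_{\agentIdxA,N|t}^{*,\iterIdx},t+N)$, so that $x_{\agentIdxA,N|t}^{*,\iterIdx}=x_{\agentIdxA,k^*}^{\iterIdxB^*}\in\mathcal{SS}_{\agentIdxA,t+N}^{\iterIdx-1}$. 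By the construction in \eqref{eq:sampled_safe_set_synth}, the recorded input $u_{\agentIdxA,k^*}^{\iterIdxB^*}$ drives this point one step forward into $\mathcal{SS}_{\agentIdxA,t+N+1}^{\iterIdx-1}$, i.e.\ $x_{\agentIdxA,k^*+1}^{\iterIdxB^*}\in\mathcal{SS}_{\agentIdxA,t+N+1}^{\iterIdx-1}$. Define the shifted candidate $\tilde{\mathbf{u}}=\{u_{\agentIdxA,1|t}^{*,\iterIdx},\dots,u_{\agentIdxA,N-1|t}^{*,\iterIdx},u_{\agentIdxA,k^*}^{\iterIdxB^*}\}$ with state trajectory $\tilde{\mathbf{x}}=\{x_{\agentIdxA,1|t}^{*,\iterIdx},\dots,x_{\agentIdxA,N|t}^{*,\iterIdx},x_{\agentIdxA,k^*+1}^{\iterIdxB^*}\}$. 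I would then check each constraint of \eqref{eq:local_fhocp}: the dynamics hold by construction; the initial condition matches $x_{\agentIdxA,t+1}^{\iterIdx}$ by the closed-loop update \eqref{eq:lmpc_feedback_control_law}; the local state/input constraints hold since the first $N-1$ entries are inherited from the optimizer at $t$ and the appended entries belong to $\mathcal{SS}_{\agentIdxA,t+N+1}^{\iterIdx-1}$, a set of recorded (hence feasible) trajectory points; the decomposed constraint $g_{\agentIdxA,t+1+k}^{\iterIdx-1}(\tilde x_k)\preceq 0$ holds for $k=0,\dots,N-2$ because $g_{\agentIdxA,\tau}^{\iterIdx-1}$ is indexed by absolute time and was enforced at time $t$, while for $k=N-1$ we use $\tilde x_{N-1}=x_{\agentIdxA,N|t}^{*,\iterIdx}\in\mathcal{SS}_{\agentIdxA,t+N}^{\iterIdx-1}$, which satisfies $g_{\agentIdxA,t+N}^{\iterIdx-1}(\cdot)\preceq 0$ by Proposition~\ref{prop:backwards_reachable}; and the terminal constraint holds since $\tilde x_N=x_{\agentIdxA,k^*+1}^{\iterIdxB^*}\in\mathcal{SS}_{\agentIdxA,t+1+N}^{\iterIdx-1}$. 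Hence $\mathcal{P}_{\agentIdxA,N}^{\iterIdx}(x_{\agentIdxA,t+1}^{\iterIdx},t+1)$ is feasible, and induction on $t$ yields feasibility for all subsequent times.

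\emph{Step (ii): cost decrease.} Evaluating the objective of \eqref{eq:local_fhocp} on $\tilde{\mathbf{u}}$, the shifted stage costs cancel against those in $J_{\agentIdxA,N}^{*,\iterIdx}(x_{\agentIdxA,t}^{\iterIdx},t)$, leaving the tail $\mathds{1}(x_{\agentIdxA,k^*+1}^{\iterIdxB^*};x_{\agentIdxA,F})+V_{\agentIdxA}^{\iterIdx-1}(x_{\agentIdxA,k^*+1}^{\iterIdxB^*},t+1+N)$ to be compared with $V_{\agentIdxA}^{\iterIdx-1}(x_{\agentIdxA,N|t}^{*,\iterIdx},t+N)=T_{\agentIdxA}^{\iterIdxB^*}-k^*$. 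Since $x_{\agentIdxA,k^*+1}^{\iterIdxB^*}\in\mathcal{SS}_{\agentIdxA,t+1+N}^{\iterIdx-1}$, the pair $(\iterIdxB^*,k^*+1)$ is admissible in the minimization defining $V_{\agentIdxA}^{\iterIdx-1}(\cdot,t+1+N)$, so that value is at most $T_{\agentIdxA}^{\iterIdxB^*}-k^*-1$; adding the indicator (which is at most $1$, and which together with the value both vanish in the edge case $k^*+1=T_{\agentIdxA}^{\iterIdxB^*}$) bounds the tail by $T_{\agentIdxA}^{\iterIdxB^*}-k^*$. Therefore the candidate cost is at most $J_{\agentIdxA,N}^{*,\iterIdx}(x_{\agentIdxA,t}^{\iterIdx},t)-\mathds{1}(x_{\agentIdxA,t}^{\iterIdx};x_{\agentIdxA,F})$, and by optimality at $t+1$ together with $x_{\agentIdxA,t}^{\iterIdx}\neq x_{\agentIdxA,F}$ along the while-loop, $J_{\agentIdxA,N}^{*,\iterIdx}(x_{\agentIdxA,t+1}^{\iterIdx},t+1)\le J_{\agentIdxA,N}^{*,\iterIdx}(x_{\agentIdxA,t}^{\iterIdx},t)-1$.

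\emph{Step (iii): conclusion.} The optimal cost is a nonnegative integer that drops by at least one each closed-loop step until the goal is reached, and $J_{\agentIdxA,N}^{*,\iterIdx}=0$ forces every predicted state, in particular $x_{\agentIdxA,0|\cdot}^{*,\iterIdx}$, to equal $x_{\agentIdxA,F}$. Since at time $t$ we have $J_{\agentIdxA,N}^{*,\iterIdx}(x_{\agentIdxA,t}^{\iterIdx},t)\le N+V_{\agentIdxA}^{\iterIdx-1}(x_{\agentIdxA,N|t}^{*,\iterIdx},t+N)\le N+T_{\agentIdxA}^{\iterIdxB^*}-k^*=\bar T$, the closed loop reaches $x_{\agentIdxA,F}$ within $\bar T$ steps, i.e.\ by time $t+\bar T$. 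The hypothesis $x_{\agentIdxA,F}\in\mathcal{SS}_{\agentIdxA,t+N}^{\iterIdx-1}$ (and its propagation, which follows from reachability of the safe sets and $x_{\agentIdxA,F}$ being an equilibrium) is what keeps the terminal set nonempty and makes the tail construction terminate at the goal. I expect the main obstacle to be the index bookkeeping in Step (i): specifically verifying that the appended tail state respects the time-varying decomposed constraint at absolute time $t+N+1$ and lies in $\mathcal{SS}_{\agentIdxA,t+N+1}^{\iterIdx-1}$, which is precisely where the one-step reachability baked into \eqref{eq:sampled_safe_set_synth} and the constraint-satisfaction guarantee of Proposition~\ref{prop:backwards_reachable} are used; the degenerate cases where the tail state is already $x_{\agentIdxA,F}$ are a minor additional check.
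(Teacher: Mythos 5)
Your proposal is correct and follows essentially the same route as the paper's proof: an upper bound on $J_{\agentIdxA,N}^{*,\iterIdx}(x_{\agentIdxA,t}^{\iterIdx},t)$ via the safe-set tail, the shifted-candidate recursive-feasibility and cost-decrease inequality, and the conclusion that the nonnegative integer cost must hit zero (hence the state must be $x_{\agentIdxA,F}$) within $\bar{T}$ steps. The only difference is that you spell out the candidate-construction details that the paper delegates to ``standard MPC arguments and Prop.~\ref{prop:backwards_reachable},'' and you state the initial bound as $J_{\agentIdxA,N}^{*,\iterIdx}\leq\bar{T}$ rather than the paper's (slightly loose) equality $\bar{T}-1$, which changes nothing in the conclusion.
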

\begin{proof}
By assumption, $\mathcal{P}_{\agentIdxA,N}^{\iterIdx}(x_{\agentIdxA,t}^\iterIdx,t)$ is feasible at time $t$, which implies that there exists a sequence of feasible inputs $\bar{\mathbf{u}} = \{\mathbf{u}_{\agentIdxA,t}^{*,\iterIdx}(x_{\agentIdxA,t}^\iterIdx), u_{\agentIdxA,k^*}^{\iterIdxB^*}, \dots, u_{\agentIdxA,T_\agentIdxA^{\iterIdxB^*}-1}^{\iterIdxB^*}\}$, which drives agent $\agentIdxA$ to the goal state $x_{\agentIdxA,F}$ in $\bar{T}$ steps. Therefore,
\begin{align}
    J_{\agentIdxA,N}^{*,\iterIdx}(x_{\agentIdxA,t}^\iterIdx,t) = \bar{T}-1. 
    \label{eq:lmpc_ub}
\end{align}
Since $\mathcal{SS}_{\agentIdxA,t+N}^{\iterIdx-1} \ni x_{\agentIdxA,F}$, it follows from standard MPC arguments and Prop.~\ref{prop:backwards_reachable} that $\mathcal{P}_{\agentIdxA,N}^{\iterIdx}(\cdot,k)$ is feasible for time steps $k > t$ and
\begin{align}
    J_{\agentIdxA,N}^{*,\iterIdx}(x_{\agentIdxA,t}^\iterIdx,t) \geq \sum_{l = t}^k\mathds{1}(x_{\agentIdxA,l}^\iterIdx;x_{\agentIdxA,F}) + J_{\agentIdxA,N}^{*,\iterIdx}(x_{\agentIdxA,k+1}^\iterIdx,k+1),
    \label{eq:lmpc_lb}
\end{align}
where $x_{\agentIdxA,k+1}^\iterIdx=f_\agentIdxA(x_{\agentIdxA,k}^\iterIdx,\bar{\mathbf{u}}_{k-t})$. Assume that $x_{\agentIdxA,k}^\iterIdx \neq x_{\agentIdxA,F}$ for $k = \{t,\dots,t+\bar{T}-1\}$. Then at time step $k = t+\bar{T}-1$, using \eqref{eq:lmpc_ub} and \eqref{eq:lmpc_lb} we have
\begin{align}
    J_{\agentIdxA,N}^{*,\iterIdx}(x_{\agentIdxA,k+1}^\iterIdx,k+1) &\leq J_{\agentIdxA,N}^{*,\iterIdx}(x_{\agentIdxA,t}^\iterIdx,t) - \sum_{l=t}^k \mathds{1}(x_{\agentIdxA,l}^\iterIdx;x_{\agentIdxA,F}) \nonumber \\
    &= (\bar{T}-1) - (\bar{T}-1) = 0 \nonumber
    % &= T_\agentIdxA^{\iterIdxB^*}-k^*, \nonumber
\end{align}
which implies that $x_{\agentIdxA,k+1}^\iterIdx = x_{\agentIdxA,t+\bar{T}}^\iterIdx = x_{\agentIdxA,F}$.
\end{proof}

\begin{theorem}
Consider the system in \eqref{eq:nonlin_dyn_global} in closed loop with the decentralized LMPC \eqref{eq:local_fhocp} and \eqref{eq:lmpc_feedback_control_law}. Let Assumption~\ref{asm:constraint_decomposition} and \ref{asm:init_feas_traj} hold. Then at each iteration $\iterIdx$ for every agent $\agentIdxA \in \agentSet$: 
\begin{enumerate}
    \item The decentralized LMPC \eqref{eq:local_fhocp} is feasible for all time steps $t \geq 0$.
    \item The system \eqref{eq:nonlin_dyn_local} in closed-loop with \eqref{eq:local_fhocp} and \eqref{eq:lmpc_feedback_control_law} converges to the equilibrium point $x_{\agentIdxA,F}$ in finite time.
    \item The sampled safe sets $\mathcal{SS}_{\agentIdxA,t}^{\iterIdx}$ are non-empty for all sample times $t$.
\end{enumerate}
\label{thm:feas_stab}
\end{theorem}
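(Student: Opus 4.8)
The plan is to argue by induction on the iteration index $q \ge 1$, carrying along the inductive hypothesis that iteration $q-1$ terminated successfully for every agent (so the recorded trajectories reach $x_{i,F}$ and satisfy all constraints) and that the terminal components $\{\mathcal{SS}_{i,t}^{q-1}, V_i^{q-1}(\cdot,t), g_{i,t}^{q-1}(\cdot)\}$ produced by \texttt{synthesizeFHOCP} enjoy the conclusions of Proposition~\ref{prop:backwards_reachable}: the safe sets are non-empty, one-step reachable to $\mathcal{SS}_{i,t+1}^{q-1}$ hence reachable to $x_{i,F}$, and satisfy the decomposed constraints. The base case $q=1$ is immediate, since Assumption~\ref{asm:init_feas_traj} furnishes a successful iteration $0$, so Proposition~\ref{prop:backwards_reachable} applies to the dataset $(\mathbf{x}^0,\mathbf{u}^0)$ and establishes the hypothesis. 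The inductive step then splits into the three claims.

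For claim 1) I would run a nested induction on the time step $t$ of iteration $q$. At $t=0$ the state is $x_{i,S}=x_{i,0}^{q-1}$ and a feasible candidate for \eqref{eq:local_fhocp} is the stored closed-loop trajectory of iteration $q-1$, i.e.\ the inputs $u_{i,0}^{q-1},\dots,u_{i,N-1}^{q-1}$, padded with the equilibrium input at $x_{i,F}$ if $T_i^{q-1}<N$. Each recorded point $x_{i,k}^{q-1}$ lies in $\mathcal{SS}_{i,k}^{q-1}$ by construction in Algorithm~\ref{alg:fhocp_synthesis} (as $q-1\in\mathcal{I}$ and $k\in\mathcal{T}_k$), hence satisfies \eqref{eq:local_fhocp_decoup_constr}, $x_{i,N}^{q-1}\in\mathcal{SS}_{i,N}^{q-1}$ meets \eqref{eq:local_fhocp_term_set}, and \eqref{eq:local_fhocp_local_constr} holds because iteration $q-1$ was feasible. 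For $t\to t+1$ I would use the shifted candidate: drop $u_{i,0|t}^{*,q}$, keep $u_{i,1|t}^{*,q},\dots,u_{i,N-1|t}^{*,q}$, and append the input that, by the one-step reachability of Proposition~\ref{prop:backwards_reachable}, steers $x_{i,N|t}^{*,q}\in\mathcal{SS}_{i,t+N}^{q-1}$ into $\mathcal{SS}_{i,t+N+1}^{q-1}$; feasibility follows because the time-indexing of the decomposed constraints lines up ($g_{i,(t+1)+k}^{q-1}=g_{i,t+k+1}^{q-1}$), because $x_{i,N|t}^{*,q}$ satisfies $g_{i,t+N}^{q-1}$ as a safe-set point, and because the appended transition stays in the constraint sets and lands in $\mathcal{SS}_{i,t+N+1}^{q-1}$. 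This gives persistent feasibility; moreover, since every agent respects its decomposed constraint at every closed-loop step, Assumption~\ref{asm:constraint_decomposition} yields closed-loop satisfaction of the global coupling constraint \eqref{eq:coupled_constr}.

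Claim 2) then follows from the descent inequality obtained from the same shifted candidate: the minimum-time stage and terminal costs are non-negative integers, so $J_{i,N}^{*,q}(x_{i,t}^{q},t)\in\mathbb{Z}_{\ge 0}$, and choosing the appended input to follow the stored trajectory that attains $V_i^{q-1}(x_{i,N|t}^{*,q},t+N)$ gives a unit decrease of the terminal cost, so that $J_{i,N}^{*,q}(x_{i,t+1}^{q},t+1)\le J_{i,N}^{*,q}(x_{i,t}^{q},t)-\mathds{1}(x_{i,t}^{q};x_{i,F})$; a strictly decreasing sequence of non-negative integers reaches $0$ in finitely many steps, and $J=0$ forces $x_{i,t}^{q}=x_{i,F}$, which yields a finite convergence time $T_i^q$ (equivalently, once $t$ is large enough that $x_{i,F}\in\mathcal{SS}_{i,t+N}^{q-1}$ one can invoke Proposition~\ref{prop:finite_time_convergence} directly). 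Finally, for claim 3): since iteration $q$ converged, the recorded data $(\mathbf{x}_i^q,\mathbf{u}_i^q)$ is a successful, constraint-satisfying trajectory, so \texttt{synthesizeFHOCP} applied to the augmented dataset meets the hypotheses of Proposition~\ref{prop:backwards_reachable}, giving non-emptiness of $\mathcal{SS}_{i,t}^{q}$ together with reachability to $x_{i,F}$ and satisfaction of $g_{i,t}^{q}$, which closes the induction.

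The hard part will be the feasibility bookkeeping for the shifted candidate under the \emph{time-varying} terminal set and decomposed constraints: one must verify that the appended tail still satisfies \eqref{eq:local_fhocp_decoup_constr} and \eqref{eq:local_fhocp_term_set} at the shifted time indices, which rests entirely on the one-step reachability of the sampled safe sets that Algorithm~\ref{alg:fhocp_synthesis} is engineered to guarantee via the sliding-window parameters $\underline{t},\bar{t},\underline{\iterIdx}$, and on the fact that the chain of safe sets terminates at the singleton $\{x_{i,F}\}$ so the shift can be iterated over the whole (unbounded) task horizon. The remaining work is routine: the $T_i^{q-1}<N$ padding at $t=0$, the consistency of time indices between consecutive FHOCPs, and confirming the unit decrease of $V_i^{q-1}$ along stored trajectories used in the descent step.
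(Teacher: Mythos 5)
Your proposal is correct and follows essentially the same route as the paper: induction over iterations, Proposition~\ref{prop:backwards_reachable} for reachability and constraint satisfaction of the safe sets, recursive feasibility via the shifted candidate (what the paper compresses into ``standard MPC arguments''), finite-time convergence via the descent argument of Proposition~\ref{prop:finite_time_convergence}, and closure of the induction through the successful iteration yielding non-empty $\mathcal{SS}_{\agentIdxA,t}^{\iterIdx}$. The only difference is that you spell out explicitly the time-index bookkeeping for the shifted candidate and the $t=0$ candidate padding, which the paper leaves implicit.
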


\begin{proof} At iteration $\iterIdx=1$, by Assumption~\ref{asm:constraint_decomposition} and \ref{asm:init_feas_traj}, the sampled safe sets $\mathcal{SS}_{\agentIdxA,t}^{0}$ are non-empty for all $t \in \mathbb{N}$ and (\ref{eq:local_fhocp}) is feasible at $t=0$. 

From Proposition~\ref{prop:backwards_reachable}, we have that at iteration $\iterIdx$, for each agent $\agentIdxA$, the terminal constraint sets $\mathcal{SS}_{\agentIdxA,t}^{\iterIdx-1}$ are reachable to $x_{\agentIdxA,F}$ and satisfy the decomposed constraints for all $t \in \{0,\dots,\max_{\iterIdxB\in\mathcal{I}}(T_\agentIdxA^{\iterIdxB})\}$. Additionally at time $t=0$, \eqref{eq:local_fhocp} is feasible and therefore the proof for 1) follows from standard MPC arguments~\cite{borrelli2017predictive}.

Again, leveraging the reachability property of $\mathcal{SS}_{\agentIdxA,t}^{\iterIdx-1}$, $\exists k \in \{0,\dots,\max_{\iterIdxB\in\mathcal{I}}(T_\agentIdxA^{\iterIdxB})\}$ such that $x_{\agentIdxA,F} \in \mathcal{SS}_{\agentIdxA,k}^{\iterIdx-1}$. Since from 1) we have that \eqref{eq:local_fhocp} is feasible for all $t \geq 0$, 2) follows immediately after applying Proposition~\ref{prop:finite_time_convergence}.

From 1) and 2), task execution at iteration $\iterIdx$ is successful, i.e. $x_{\agentIdxA,T_{\agentIdxA}^{\iterIdx}}^{\iterIdx} = x_{\agentIdxA,F}$. Therefore, by Asssumption~\ref{asm:constraint_decomposition}, we may construct $\mathcal{SS}_{\agentIdxA,t}^{\iterIdx} \ni x_{\agentIdxA,t}^{\iterIdx}$ using Alg.~\ref{alg:fhocp_synthesis}, which shows 3). Repeating the argument for each iteration $\iterIdx$ concludes the proof.
\end{proof}

\begin{corollary}
Consider the system (\ref{eq:nonlin_dyn_global}) in closed loop with the decentralized LMPC (\ref{eq:local_fhocp}) and (\ref{eq:lmpc_feedback_control_law}). Let Assumption \ref{asm:init_feas_traj} hold. Then for every agent $\agentIdxA \in \agentSet$, the task completion time does not increase with the iteration index $\iterIdx$, i.e. $T_{\agentIdxA}^{\iterIdx} \leq T_{\agentIdxA}^{\iterIdx-1}$.
\label{cor:cost_improvement}
\end{corollary}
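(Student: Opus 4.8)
The plan is to show that the closed-loop trajectory of agent $\agentIdxA$ at iteration $\iterIdx$ is itself a feasible candidate solution at iteration $\iterIdx-1$, so that its length is upper-bounded by the cost of that candidate. First I would invoke Theorem~\ref{thm:feas_stab}, which guarantees that iteration $\iterIdx-1$ is successful for every agent, so that $T_\agentIdxA^{\iterIdx-1}$ is well defined and the trajectory $\mathbf{x}_\agentIdxA^{\iterIdx-1}$ is recorded in the dataset. The key observation is that when \texttt{synthesizeFHOCP} builds the safe sets for iteration $\iterIdx$, the index set $\mathcal{I}$ includes iteration $\iterIdx-1$ (it collects the $\underline{\iterIdx}$ most recent successful iterations, and any design-parameter adjustments in Alg.~\ref{alg:fhocp_synthesis} only shrink $\bar t$, $\underline t$ or enlarge the window back toward $\check\iterIdx$). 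Hence every state $x_{\agentIdxA,k}^{\iterIdx-1}$ along the previous closed-loop trajectory that satisfies the (possibly updated) decomposed constraints lies in $\mathcal{SS}_{\agentIdxA,k}^{\iterIdx}$, and the associated value function satisfies $V_\agentIdxA^{\iterIdx}(x_{\agentIdxA,k}^{\iterIdx-1},k) \le T_\agentIdxA^{\iterIdx-1} - k$.

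The second step is the standard LMPC candidate-solution argument, now applied at the start of iteration $\iterIdx$. At $t=0$, $x_{\agentIdxA,0}^{\iterIdx} = x_{\agentIdxA,S} = x_{\agentIdxA,0}^{\iterIdx-1}$, so I would exhibit the candidate input sequence $\{u_{\agentIdxA,0}^{\iterIdx-1},\dots,u_{\agentIdxA,N-1}^{\iterIdx-1}\}$ for $\mathcal{P}_{\agentIdxA,N}^{\iterIdx}(x_{\agentIdxA,S},0)$: it generates the state trajectory $x_{\agentIdxA,0}^{\iterIdx-1},\dots,x_{\agentIdxA,N}^{\iterIdx-1}$, which satisfies the dynamics, the local constraints, the decomposed constraints (by construction of $\mathcal{SS}$ in Alg.~\ref{alg:fhocp_synthesis}, which ensures the recorded points used satisfy $g_{\agentIdxA,t}^{\iterIdx}\preceq 0$), and terminates in $\mathcal{SS}_{\agentIdxA,N}^{\iterIdx}$ since $x_{\agentIdxA,N}^{\iterIdx-1}$ is a recorded point at time index $N$. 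Evaluating the FHOCP cost on this candidate gives
\begin{align}
    J_{\agentIdxA,N}^{*,\iterIdx}(x_{\agentIdxA,S},0) \;\le\; \sum_{k=0}^{N-1}\mathds{1}(x_{\agentIdxA,k}^{\iterIdx-1};x_{\agentIdxA,F}) + V_\agentIdxA^{\iterIdx-1}(x_{\agentIdxA,N}^{\iterIdx-1},N) \;\le\; N + (T_\agentIdxA^{\iterIdx-1}-N) \;=\; T_\agentIdxA^{\iterIdx-1}. \nonumber
\end{align}

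The third step converts this bound on the optimal cost into a bound on the realized completion time. By Proposition~\ref{prop:finite_time_convergence} (which applies since Theorem~\ref{thm:feas_stab} gives feasibility for all $t\ge 0$ and $x_{\agentIdxA,F}\in\mathcal{SS}_{\agentIdxA,k}^{\iterIdx-1}$ for some $k$), the closed-loop trajectory reaches $x_{\agentIdxA,F}$, and the cost-decrease inequality \eqref{eq:lmpc_lb} telescoped from $t=0$ shows that the number of steps actually taken equals $J_{\agentIdxA,N}^{*,\iterIdx}(x_{\agentIdxA,S},0)$; more precisely, the realized cost $\sum_{t=0}^{T_\agentIdxA^{\iterIdx}-1}\mathds{1}(x_{\agentIdxA,t}^{\iterIdx};x_{\agentIdxA,F}) = T_\agentIdxA^{\iterIdx}$ is bounded above by $J_{\agentIdxA,N}^{*,\iterIdx}(x_{\agentIdxA,S},0)$. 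Combining with the previous display yields $T_\agentIdxA^{\iterIdx} \le T_\agentIdxA^{\iterIdx-1}$, as claimed.

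The main obstacle I anticipate is the bookkeeping around Alg.~\ref{alg:fhocp_synthesis}: I must be careful that the constraint-decomposition update loop does not discard the most recent trajectory $\mathbf{x}_\agentIdxA^{\iterIdx-1}$ from $\mathcal{I}$, and that the decomposed constraints $g_{\agentIdxA,t}^{\iterIdx}$ at iteration $\iterIdx$ are indeed satisfied along $\mathbf{x}_\agentIdxA^{\iterIdx-1}$ — this is precisely what the \texttt{while} loop in the algorithm enforces (it terminates only once all retained safe-set points are feasible), so the candidate trajectory is admissible by construction. A minor subtlety is the horizon-$N$ alignment of time indices in $\mathcal{SS}_{\agentIdxA,N}^{\iterIdx}$ versus $\mathcal{SS}_{\agentIdxA,N}^{\iterIdx-1}$; since at $t=0$ the FHOCP terminal constraint is $\mathcal{SS}_{\agentIdxA,N}^{\iterIdx-1}$, and the value function used is $V_\agentIdxA^{\iterIdx-1}(\cdot,N)$, the candidate argument should be run against iteration $\iterIdx-1$'s safe set directly, with Proposition~\ref{prop:backwards_reachable} guaranteeing reachability of $x_{\agentIdxA,F}$ from every point used.
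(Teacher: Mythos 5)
Your proposal is correct and follows essentially the same route as the paper's proof: you bound the optimal FHOCP cost at $t=0$ of iteration $\iterIdx$ from above by using the iteration-$(\iterIdx-1)$ trajectory as a feasible candidate terminating in $\mathcal{SS}_{\agentIdxA,N}^{\iterIdx-1}$ with terminal cost $V_{\agentIdxA}^{\iterIdx-1}(x_{\agentIdxA,N}^{\iterIdx-1},N)$, and from below by telescoping the cost-decrease inequality \eqref{eq:lmpc_lb}, exactly matching the paper's chain \eqref{eq:cost_improve_ub}--\eqref{eq:cost_improve_lb}. The only blemish is the early superscript slip (writing $V_{\agentIdxA}^{\iterIdx}$ and $\mathcal{SS}_{\agentIdxA,N}^{\iterIdx}$ where the FHOCP at iteration $\iterIdx$ actually uses the iteration-$(\iterIdx-1)$ terminal components), which you yourself correct in the final paragraph, so the argument stands.
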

\begin{proof}
Let $x_{\agentIdxA,t}^{\iterIdx}$ and $u_{\agentIdxA,t}^{\iterIdx}$ be the elements from the closed-loop state trajectory and corresponding input sequence at iteration $\iterIdx$. From (\ref{eq:cost_to_go}), we have that the iteration cost at iteration $\iterIdx-1$ can be written as
\begin{align}
    J_{\agentIdxA,T_{\agentIdxA}^{\iterIdx-1}}^{\iterIdx-1}(x_{\agentIdxA,S},0) &= \sum_{t=0}^{N-1} \mathds{1}(x_{\agentIdxA,t}^{\iterIdx-1};x_{\agentIdxA,F})  + \sum_{t=N}^{T_{\agentIdxA}^{\iterIdx-1}} \mathds{1}(x_{\agentIdxA,t}^{\iterIdx-1};x_{\agentIdxA,F}) \nonumber \\
    &\geq \sum_{t=0}^{N-1} \mathds{1}(x_{\agentIdxA,t}^{\iterIdx-1};x_{\agentIdxA,F}) + V_{\agentIdxA}^{\iterIdx-1}(x_{\agentIdxA,N}^{\iterIdx-1}, N) \nonumber \\
    &\geq J_{\agentIdxA,N}^{*,\iterIdx}(x_{\agentIdxA,S},0).
    \label{eq:cost_improve_ub}
\end{align}
Next, using \eqref{eq:lmpc_lb}, we have that at $t=0$
\begin{align}
    J_{\agentIdxA,N}^{*,\iterIdx}(x_{\agentIdxA,0}^{\iterIdx},0) &\geq \sum_{l=0}^{T_{\agentIdxA}^{\iterIdx}}\mathds{1}(x_{\agentIdxA,l}^{\iterIdx};x_{\agentIdxA,F}) = J_{\agentIdxA,T_{\agentIdxA}^{\iterIdx}}^{\iterIdx}(x_{\agentIdxA,S},0).
    \label{eq:cost_improve_lb}
\end{align}
From (\ref{eq:cost_improve_ub}) and (\ref{eq:cost_improve_lb}), and using the fact that for agent $\agentIdxA$, $x_{\agentIdxA,0}^\iterIdx = x_{\agentIdxA,S}$, we arrive at the final bounds
\begin{align}
    J_{\agentIdxA,T_{\agentIdxA}^{\iterIdx}}^{\iterIdx}(x_{\agentIdxA,S},0)  \leq J_{\agentIdxA,N}^{*,\iterIdx}(x_{\agentIdxA,S},0) \leq J_{\agentIdxA,T_{\agentIdxA}^{\iterIdx-1}}^{\iterIdx-1}(x_{\agentIdxA,S},0), \nonumber
\end{align}
which implies that $T_{\agentIdxA}^{\iterIdx} \leq T_{\agentIdxA}^{\iterIdx-1}$.
\end{proof}

\section{Multi-Vehicle Collision Avoidance} \label{sec:collision_avoidance_example}
In this section, we present a numerical example of decentralized LMPC in the context of multi-vehicle collision avoidance. We compare the results from the decentralized LMPC with those from a centralized approach. The control objective is for $M=3$ vehicles to reach the goal equilibrium points $\zeta_F^\agentIdxA$ from their respective initial states $\zeta_S^\agentIdxA$ in minimum time (the code is available online\footnote{https://github.com/zhu-edward/multi-agent-LMPC}\footnote{A video of the results may be found at https://youtu.be/cB9zckRm5j8}).

\subsection{Agent Model} \label{sec:example_agent_model}

We model the vehicles using the kinematic bicycle model, which is discretized using forward Euler integration with a time step of $dt = 0.1$s as in \cite{kong2015kinematic}.
\begin{align}
    \zeta_{\agentIdxA,t+1} &= \zeta_{\agentIdxA,t} + dt\begin{bmatrix} 
    v_{\agentIdxA,t}\cos(\psi_{\agentIdxA,t} + \beta_{\agentIdxA,t}) \\
    v_{\agentIdxA,t}\sin(\psi_{\agentIdxA,t} + \beta_{\agentIdxA,t}) \\
    v_{\agentIdxA,t}\sin(\beta_{\agentIdxA,t})/l_r \\
    a_{\agentIdxA,t}
    \end{bmatrix} = f_{\agentIdxA}(\zeta_{\agentIdxA,t},u_{\agentIdxA,t}), \nonumber
    \label{eq:agent_model}
\end{align}
where $\beta_{\agentIdxA,t} = \arctan(l_r\tan(\delta_{\agentIdxA,t})/(l_f+l_r))$. 
The state and input variables are $\zeta_{\agentIdxA,t} = [x_{\agentIdxA,t}, y_{\agentIdxA,t}, \psi_{\agentIdxA,t}, v_{\agentIdxA,t}]^{\top} \in \mathbb{R}^4$ (with units m, m, rad, m/s) and $u_{\agentIdxA,t} = [\delta_{\agentIdxA,t}, a_{\agentIdxA,t}]^{\top} \in \mathbb{R}^2$ (with units rad, m/s$^2$) respectively. The vehicles are coupled via collision avoidance constraints where we define a circular collision buffer about the geometric center of the vehicle with radius $r_i$ and require that for all sample times, 
\begin{equation}
    \|\zeta_{\agentIdxA,t}(1{:}2)-\zeta_{\agentIdxB,t}(1{:}2)\|_2 \geq r_i+r_j, \ \forall \agentIdxA,\agentIdxB \in \agentSet, \ \agentIdxA \neq \agentIdxB,
    \label{eq:collision_avoidance_constraint}
\end{equation} 
% where $\zeta_{\agentIdxA,t}(1{:}2)$ denotes the first two elements of $\zeta_{\agentIdxA,t}$.

\subsection{Decoupled FHOCP Formulation} \label{sec:example_fhocp_formulation}

The decoupled FHOCP for agent $\agentIdxA$ at time $t$ and iteration $\iterIdx$ given initial condition $\zeta_{\agentIdxA}$ is formulated as in (\ref{eq:local_fhocp}). 
% We use the stage cost $h_i(\zeta_{\agentIdxA,k|t}^{\iterIdx},u_{\agentIdxA,k|t}^{\iterIdx}) = \mathds{1}_{\zeta_{\agentIdxA,F}}(\zeta_{\agentIdxA,k|t}^{\iterIdx})$, where the indicator function $\mathds{1}_{\zeta_{\agentIdxA,F}}(\zeta_\agentIdxA)$, is 0 when $\zeta_\agentIdxA=\zeta_{\agentIdxA,F}$ and is 1 otherwise. 
For the local state and input constraints in (\ref{eq:local_fhocp_local_constr}), in addition to the box constraints $|\zeta_{\agentIdxA,k|t}^{\iterIdx}| \preceq [10, 10, \infty, 10]^{\top}$ and $|u_{\agentIdxA,k|t}^{\iterIdx}| \preceq [0.5, 3]^{\top}, \ \forall k \in \mathbb{N}_{N-1}$, we also impose constraints on the control rate via $|u_{\agentIdxA,0|t}^{\iterIdx} - u_{\agentIdxA,t-1}^{\iterIdx}| \preceq dt\cdot[0.7, 7 ]^{\top}$ and $|u_{\agentIdxA,k+1|t}^{\iterIdx} - u_{\agentIdxA,k|t}^{\iterIdx}| \preceq dt\cdot[0.7, 7]^{\top}, \ \forall k \in \mathbb{N}_{N-2}$. Here, $|\cdot|$ represents the element-wise absolute value.

\begin{figure*}[ht!]
\centering
  \includegraphics[width=0.95\textwidth]{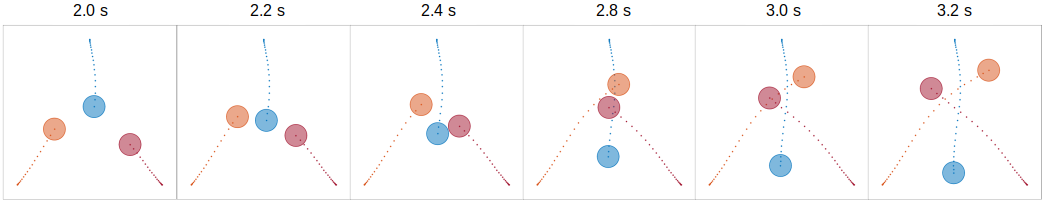}
  \vspace{-0.3cm}
  \caption{Snapshots of the decentralized LMPC trajectory at convergence. Circles represent the collision buffers centered at the position of agent 1 (blue), 2 (orange), and 3 (red).}
  \label{fig:traj_snapshots}
  \vspace{-0.3cm}
\end{figure*}

\begin{figure*}[ht!]
\centering
\includegraphics[width=\textwidth]{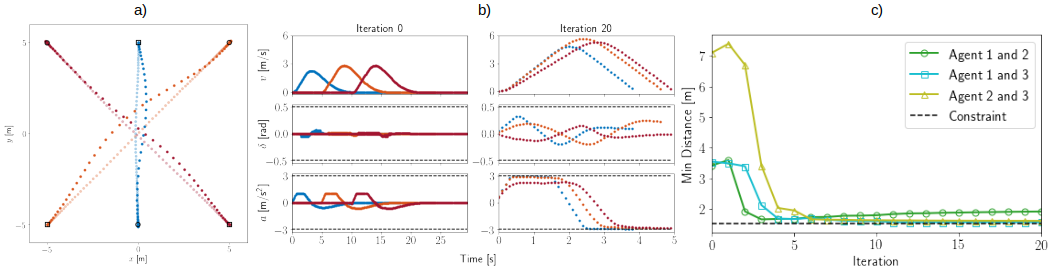}
\vspace{-0.8cm}
\caption{a) Initial (light) vs. steady state (dark) position trajectory (top). The starting and goal states are denoted as the square and circle markers respectively. b) Velocity profile and input sequence (bottom) for agent 1 (blue), 2 (orange), and 3 (red) at the first and last iterations. The black dashed lines correspond to the input box constraints. c) Minimum distance between agents at each iteration.}
\label{fig:combined_results}
\vspace{-0.3cm}
\end{figure*}

In (\ref{eq:local_fhocp_decoup_constr}), we decompose the global constraints into time-varying hyperplane constraints on the position states of each vehicle, i.e. $g_{\agentIdxA,t+k}^{\iterIdx-1}(\zeta_{\agentIdxA,k|t}^{\iterIdx}) = H_{\agentIdxA,t+k}^{\iterIdx-1} \zeta_{\agentIdxA,k|t}^{\iterIdx} + h_{\agentIdxA,t+k}^{\iterIdx-1} \preceq 0$, with $H_{\agentIdxA,t+k}^{\iterIdx-1} \in \mathbb{R}^{M-1\times 4}$ and $h_{\agentIdxA,t+k}^{\iterIdx-1} \in \mathbb{R}^{M-1}$ for all $k \in \mathbb{N}_{N-1}$. 
This implementation of line 8 of Alg.~\ref{alg:fhocp_synthesis} is achieved by solving a hard margin support vector machine (SVM) for all agent pairs $(\agentIdxA,\agentIdxB) \in \agentSet, \ \agentIdxA \neq \agentIdxB$, where maximum margin separating hyperplanes can be found for time $t$ if the position states in the candidate sampled safe sets $\mathcal{SS}_{\agentIdxA,t}^{\iterIdx-1}$ and $\mathcal{SS}_{\agentIdxB,t}^{\iterIdx-1}$ are linearly separable.
% This implementation of line 9 of Alg. \ref{alg:fhocp_synthesis} is achieved by solving the following quadratic program (QP) for all agent pairs $(\agentIdxA,\agentIdxB) \in \agentSet, \ \agentIdxA \neq \agentIdxB$ at time $t$
% \begin{align*}
%      \min_{\substack{H_{ij} \in \mathbb{R}^4 \\ h_i, h_j \in \mathbb{R}}} ~ & \|H_{ij}\|_2^2 - (h_i + h_j) \nonumber \\
%     \text{s.t.} \quad~  & h_i, h_j \geq 0, \ H_{ij}(2{:}4) = 0 \nonumber \\
%     & H_{ij}^{\top}x_{\agentIdxA} + h_i \leq 0, -H_{ij}^{\top}x_{\agentIdxA} + h_j > 0, \forall x_{\agentIdxA} \in \mathcal{SS}_{\agentIdxA,t}^{\iterIdx-1} \nonumber \\
%     &  H_{ij}^{\top}x_{\agentIdxB} + h_i > 0, -H_{ij}^{\top}x_{\agentIdxB} + h_j \leq 0, \forall x_{\agentIdxB} \in \mathcal{SS}_{\agentIdxB,t}^{\iterIdx-1}, 
%     \label{eq:sep_hyp_qp}
% \end{align*}
% which, if feasible, finds two parallel hyperplanes of maximum distance which separate the position states of the sampled safe sets for the pair of agents $(\agentIdxA,\agentIdxB)$ at time $t$. The QP is feasible when the position states in the sampled safe sets are linearly separable.
If all $\binom{M}{2}$ pairwise SVM problems are feasible for all time $t$ and the distance between the hyperplanes is no less than $2r$ in all cases, then we construct the constraint decomposition for agent $\agentIdxA$ at time $t$ by stacking the solution $H_{ij}$ and $h_i$ from all SVM problems involving agent $\agentIdxA$ into the matrix $H_{\agentIdxA,t}^{\iterIdx-1}$ and vector $h_{\agentIdxA,t}^{\iterIdx-1}$ respectively. Otherwise, following Alg.~\ref{alg:fhocp_synthesis}, we shrink the sets $\mathcal{T}_t$ and $\mathcal{I}$ and retry the SVM problem.

We compute the initial feasible state trajectory and input sequence $({\boldsymbol\zeta}_{\agentIdxA}^{0},\mathbf{u}_{\agentIdxA}^{0})$ using a linear time-varying MPC controller. At each time step, the decoupled FHOCPs are solved using IPOPT \cite{wachter2006implementation} by constructing a set of $|\mathcal{SS}_{\agentIdxA,t+N}^{\iterIdx-1}|$ problems where \eqref{eq:local_fhocp_term_set} is formulated as equality constraints for each element in the safe set. Corollary \ref{cor:cost_improvement} is useful here as it allows us to prune the safe sets using an upper bound on the iteration cost, which can be computed for each point in the safe sets without solving the FHOCP. Specifically, at each time $t$, we have the cost-to-come for the current state $x_{\agentIdxA,t}$, we know that the sum of the stage costs over the optimization horizon is upper bounded by the horizon length $N$, and the cost-to-go from each point in the safe set is known. If for any safe set point, the sum of these three quantities is greater than $T_\agentIdxA^{\iterIdx-1}$, we may discard that point. This helps to manage the computational burden induced by the discrete safe sets. The parameters used for this experiment are shown in Table \ref{tab:params}.

\subsection{Results and Discussion} \label{sec:example_results}

% \begin{figure*}[ht!]
% \centering
% \includegraphics[width=\textwidth]{figures/combined.png}
% \vspace{-0.8cm}
% \caption{a) Initial (light) vs. steady state (dark) position trajectory (top). The starting and goal states are denoted as the square and circle markers respectively. b) Velocity profile and input sequence (bottom) for agent 1 (blue), 2 (orange), and 3 (red) at the first and last iterations. The black dashed lines correspond to the input box constraints. c) Minimum distance between agents at each iteration.}
% \label{fig:combined_results}
% \end{figure*}

% \begin{figure}[t!]
%     \centering
%     \includegraphics[width=0.8\columnwidth]{}
%     \vspace{-0.4cm}
%     \caption{Initial (light) vs. steady state (dark) position trajectory (top). The starting and goal states are denoted as the square and circle markers respectively.}
%     \label{fig:pos_compare}
%     \includegraphics[width=\columnwidth]{}
%     \vspace{-0.8cm}
%     \caption{Velocity profile and input sequence (bottom) for agent 1 (blue), 2 (orange), and 3 (red) at the first and last iterations. The black dashed lines correspond to the input box constraints.}
%     \label{fig:compare}
% \end{figure}

% \begin{figure}[t!]
%     \centering
%     \includegraphics[width=0.85\columnwidth]{}
%     \vspace{-0.5cm}
%     \caption{Minimum distance between agents at each iteration.}
%     \label{fig:min_dist}
% \end{figure}

\renewcommand{\arraystretch}{1.1}
\begin{table}[t]
    \centering
    \caption{Parameter Values}
    \begin{tabular}{c|c!{\vrule width 1pt}c|c}
        \toprule
        $\zeta_{1,S}$ & $(0,5,-\pi/2,0)$ & $\zeta_{1,F}$ & $(0,-5,-\pi/2,0)$ \\
        $\zeta_{2,S}$ & $(-5,-5,\pi/4,0)$ & $\zeta_{2,F}$ & $(5,5,\pi/4,0)$ \\
        $\zeta_{3,S}$ & $(5,-5,3\pi/4,0)$ & $\zeta_{3,F}$ & $(-5,5,3\pi/4,0)$ \\
        $l_f,l_r$ & 0.5m & $r$ & 0.75m \\
        $N$ & 20 & $Z$ & 20 \\
        $\epsilon$ & 1e-4 & $\underline{q}$ & 2 \\
        $\bar{t}$ & 175 & $\underline{t}$ & 0 \\ \bottomrule
    \end{tabular}
    \label{tab:params}
\end{table}

\begin{table}[t]
    \centering
    \caption{Optimal Cost of the Decentralized LMPC at each Iteration}
    \begin{tabular}{c c!{\vrule width 1pt}c c}
        \toprule
        Iteration & Iteration Cost & Iteration & Iteration Cost \\ \midrule
        $\iterIdx=0$ & 296 & $\iterIdx=5$ & 51 \\
        $\iterIdx=1$ & 122 & $\iterIdx=6$ & 50 \\
        $\iterIdx=2$ & 79 & $\iterIdx=7$ & 50 \\ 
        $\iterIdx=3$ & 55 & $\iterIdx=8$ & 50 \\
        $\iterIdx=4$ & 51 & &\\
        \bottomrule
    \end{tabular}
    \label{tab:exp_costs}
\end{table}

\begin{table}[t]
    \centering
    \caption{FTOCP Solve Time}
    \begin{tabular}{c|c|c|c}
         & Max Time [s] & Min Time [s] & Avg. Time [s]  \\
        \hline Decentralized & 10.2 & 1.97 & 3.35 \\
        Centralized & 48.3 & 15.8 & 20.5
    \end{tabular}
    \label{tab:time_compare}
\end{table}

As seen in Table~\ref{tab:exp_costs}, the decentralized LMPC converges to a steady state solution where the optimal cost of each iteration is non-increasing. We additionally implemented a centralized LMPC, with the same parameters and solver, for the global system subject to the original constraint (\ref{eq:collision_avoidance_constraint}). This approach achieved a steady state cost of 48, which is only about a 4\% difference in cost with respect to the decentralized case. We also compare the computation time for solving a single FTOCP in the decentralized and centralized cases. This is summarized in Table~\ref{tab:time_compare}. For the former, we record the maximum solve time over agents at each sampling time. For the latter, we record the solve time of the centralized FHOCP at each sampling time. We obtain that over all iterations, computation time of the decentralized case is lower by a factor of 4.6x to 24x.

In Figures \ref{fig:combined_results}a) and \ref{fig:combined_results}b), we compare the initial feasible trajectory to the steady state trajectory at convergence. In the initial feasible trajectory, the agents' movements are intentionally staggered in time to guarantee safety. This can be clearly seen in the velocity profile at iteration 0 in Figure~\ref{fig:combined_results}b). At convergence, all three agents begin moving simultaneously and steer to avoid collisions around the intersection point at the origin. We notice that in the steady state input sequence, the acceleration input either saturates or is close to saturating the imposed constraint and resembles a bang-bang controller \cite{liberzon2011calculus} which switches between acceleration and deceleration at the midpoint of the trajectory. 

In Figure~\ref{fig:traj_snapshots}, we look closely at the steady state trajectory about the intersection point and see that the collision avoidance constraints are satisfied and are almost active for agents 1 and 3 at $2.4$s and agents 2 and 3 at $2.8$s. This is clearly reflected in Figure~\ref{fig:combined_results}c) which plots the minimum pairwise distance between the three agents over iterations of decentralized LMPC.

\section{Conclusion} \label{sec:conclusion}
In this paper, we presented a decentralized LMPC framework for dynamically decoupled multi-agent systems performing iterative tasks. In particular, we proposed a procedure for decomposing global constraints and synthesizing terminal sets and terminal cost functions for the FHOCP using data from previous iterations of task execution. We showed that the resulting decentralized LMPC has the properties of persistent feasibility, finite-time convergence to the goal state, and non-decreasing performance over iterations. 

In the multi-vehicle collision avoidance example, due to the parallelization opportunities afforded by the decentralized implementation, we observe a significant improvement in computation time compared to a centralized approach with only a 4\% increase in cost. In fact, the steady state solution from the decentralized approach saw saturation of the coupling collision avoidance constraint.

Moving forward, we would like to relax the assumption of perfect model knowledge and investigate approaches which leverage techniques in robust and stochastic optimal control to guarantee performance in the presence of model mismatch.
% \section{Acknowledgements}
% This work was partially funded by the grant ONR-N00014-18-1-2833.

\renewcommand{\baselinestretch}{1.0}
\bibliographystyle{IEEEtran}
\bibliography{main}

\end{document}